\DeclareMathOperator*{\argmax}{\arg\!\max}
\algnewcommand\algorithmicparfor{\textbf{for}}
\algnewcommand\algorithmicpardo{\textbf{do in parallel}}
\algnewcommand\algorithmicforeach{\textbf{for each}}
\algnewcommand{\IfThenElse}[3]{
  \State \algorithmicif\ #1\ \algorithmicthen\ #2\ \algorithmicelse\ #3}
\renewcommand{\Function}[2]{%
  \csname ALG@cmd@\ALG@L @Function\endcsname{#1}{#2}%
  \def\jayden@currentfunction{#1}%
}
\newcommand{\funclabel}[1]{%
  \@bsphack
  \protected@write\@auxout{}{%
    \string\newlabel{#1}{{\jayden@currentfunction}{\thepage}}%
  }%
  \@esphack
}
\newtheorem{lemma}{Lemma}
\newtheorem{claim}{Claim}
\newcommand{\beq}{\begin{equation}}
\newcommand{\eeq}{\end{equation}}
\newcommand{\bea}{\begin{eqnarray}}
\newcommand{\eea}{\end{eqnarray}}
\renewcommand\footnotetextcopyrightpermission[1]{} 
\begin{document}

\title{
Planting Trees for scalable and efficient Canonical Hub Labeling
}

\author{
 Kartik Lakhotia, Qing Dong, Rajgopal Kannan, Viktor Prasanna
}
\affiliation{Ming Hsieh Department of Electrical Engineering\\
 University of Southern California}
\email{{klakhoti, qingdong, rajgopak, prasanna}@usc.edu}


\begin{abstract}
Point-to-Point Shortest Distance (PPSD) query is a crucial primitive in graph database applications. Hub labeling algorithms pre-compute a labeling that converts a PPSD query into a list intersection problem, enabling very fast query response.
However, constructing hub 
labeling is computationally challenging. Even state-of-the-art parallel algorithms based on Pruned Landmark Labeling (PLL), are plagued by large label size, violation of given network hierarchy, poor scalability and inability to process large \looseness=-1 weighted graphs.



In this paper, we develop novel parallel shared and distributed memory algorithms for constructing the Canonical Hub Labeling(CHL) that is minimal 
for a given network hierarchy. To the best of our knowledge, none of the existing parallel algorithms guarantee canonical labeling for weighted graphs.
Our key contribution, the 
PLaNT algorithm, scales well beyond the limits of current practice by completely avoiding inter-node communication.
PLaNT also enables the design of a collaborative label partitioning scheme across multiple nodes for 
completely in-memory processing of massive graphs whose labels cannot fit on a single \looseness=-1node.

Compared to the sequential PLL, we empirically demonstrate upto $47.4 \times$ speedup on a $72$ thread shared-memory platform. On a 64-node cluster, PLaNT achieves an average $42 \times$ speedup over single node execution. Finally, we show how our approach demonstrates superior scalability -  we can process $14\times$ larger graphs (in terms of label size) and construct hub labeling orders of magnitude faster compared to state-of-the-art distributed paraPLL algorithm.

\end{abstract}
\maketitle

\section{INTRODUCTION AND BACKGROUND}

Point-to-Point Shortest Distance (PPSD) computation is  one of the most important primitives
encountered in graph databases. It is used for similarity analysis on biological and social networks, context-aware search on knowledge graphs, route navigation on roads etc. These applications generate a very large number of PPSD queries and making online query response fast is absolutely crucial to their \looseness=-1performance.

One way to answer a PPSD query is to run a traversal algorithm such as Dijkstra, Bellman-Ford or Delta-Stepping. However, even state-of-the-art traversal algorithms and implementations \cite{ligra, galois, julienne, gemini, asynch} have query response times in the order of hundreds of milliseconds on large graphs. This is prohibitively slow, especially if the application generates large number of queries. Another solution is to pre-compute and store all pairs shortest paths. While this approach allows queries to be answered in constant time, it incurs quadratic pre-processing time and storage complexity and hence, is not feasible for large \looseness=-1graphs. 
    
Hub-labeling is a popular alternate approach for PPSD computation. It trades off pre-processing costs with query performance, by pre-computing for each vertex, the distance to a `small' subset of vertices known as {\it hubs}. The set of (hub, distance)-tuples $L_v = \{ (h, d(v, h)) \} $ are known as the \textit{hub-labels} of vertex $v$ with $|L_v|$ its {\it label size}. A hub-labeling can correctly answer any PPSD query if it  satisfies the following {\it cover property}: Every connected pair of vertices $u,v$ are covered by a hub vertex $h$ from their shortest path i.e.
there exists an $h\in SP_{u,v}$\footnote{Table \ref{table:notations} lists some frequently used notations in this paper. For ease of description, we consider $G(V, E, W)$ to be weighted and undirected. However, all labeling approaches described here can be easily extended to directed graphs by using \textit{forward} and \textit{backward} labels for each vertex\cite{abrahamCHL}. Our implementation is indeed, compatible with directed graphs.} 
such that $(h, d(u, h)$ and $(h, d(v, h)$ are in the label set of $u$ and $v$, respectively.
Now, a PPSD query for vertices $u$ and $v$ can be answered by finding the common hub $h$ with minimum cumulative distance to $u$ and \looseness=-1$v$. 

Query response time is clearly dependent on average label size.  However, finding the optimum labeling (with minimum average label size) is known to be NP-hard \cite{cohen2hop}. Let $R_{V \rightarrow \mathbb{N}}$ denote a total order on all vertices i.e. a {\it ranking function}, also known as {\it network hierarchy}. Rather than find the optimal labeling, Abraham et al.\cite{abrahamCHL} conceptualize Canonical Hub Labeling (CHL) in which, for a given shortest path $SP_{u,v}$, \textit{only} the highest-ranked hub $h_{m} = \arg\max_{w\in SP_{u,v}} \{R(w)\}$ is added to the labels of $u$ and $v$. CHL satisfies the cover property and is minimal for a given $R$, as removing any label from it results in a violation of the cover property. Intuitively, a good ranking function $R$ will prioritize highly central vertices (such as highways vs residential streets). Such vertices are good candidates for being hubs - a large number of shortest paths in the graph can be covered with a few labels. Therefore, a labeling which is minimal for a good $R$ will be quite efficient overall. \cite{abrahamCHL} develops a sequential polynomial time algorithm for computing CHL where paths are recursively shortcut by vertices in rank order and label sets pulled from high ranked reachable vertices in the modified graph. However, this algorithm incurs large  pre-processing time rendering it infeasible for labeling large \looseness=-1graphs\cite{vldbExperimental, akibaPLL}.

\begin{table}[tbp]
\centering
\caption{Frequently used notations}
\label{table:notations}
\resizebox{\linewidth}{!}{%
\begin{tabular}{|c|c|}
\hline
$G(V, E, W)$      & a weighted undirected graph with vertex set $V$ and edges $E$ \\ \hline
$n, m$         & number of  vertices and edges; $n=|V|, m = |E|$       \\ \hline
$N_v$          & neighboring vertices of $v$                           \\ \hline
$w_{u,v}$      & weight of edge $e=(u,v) \in E$                        \\ \hline
$R$            & ranking function or network hierarchy                 \\ \hline
$SP_{u,v}$     & (vertices in) shortest path(s) between $u$, $v$    (inclusive) \\ \hline
$SPT_v$        & shortest path tree rooted at vertex $v$               \\ \hline
$d(u, v)$      & shortest path distance between vertices $u$ and $v$   \\ \hline
$(h, d(v, h))$ & a hub label for vertex $v$ with $h$ as the hub        \\ \hline
$L_v$          & set of hub labels for vertex $v$                      \\ \hline
$q$            & number of nodes in a multi-node cluster               \\ \hline
\end{tabular}}
\end{table}

Akiba et al.\cite{akibaPLL} propose Pruned Landmark Labeling (PLL)  which is arguably the most efficient sequential algorithm to compute the CHL. PLL iteratively computes Shortest Path Trees (SPTs) from roots selected in decreasing order of rank.  The efficiency of PLL comes from heavy pruning of the SPTs. As shown in fig.\ref{fig:pllTree}, for every vertex $u$ visited in $SPT_v$, PLL initiates a pre-processing {\it Distance-Query} $(v, u, \delta_{v,u})$ to determine if there exists a hub $h$ in both $L_v$ and $L_u$ such that $d(h, u) + d(h, v)\leq \delta_{v,u}$,  where $\delta_{v,u}$ is the distance to $u$ as found in $SPT_v$. Hub-label $(v, \delta_{v,u})$ is added to $L_u$ only if the query is unable to find such a common hub (we say that in such a case, the query returns \textit{false}). Otherwise, further exploration from $u$ is {\it pruned} and $(v, \delta_{v,u})$ is not added to $L_u$. 
(Note: every node is its own hub by default). 
Despite the heavy pruning, PLL is computationally very demanding. Dong et al.\cite{dongPLL} show that PLL takes several days to process large weighted graph datasets - \textit{coPaper} (15M edges) and \textit{Actor} (33M edges). Note that the ranking $R$ also affects the performance of PLL. Intuitively, ranking vertices with high-degree or high betweenness centrality higher should lead to better pruning when processing lower ranked vertices. Optimizing $R$ is of independent interest and not the focus of \cite{akibaPLL} or this study.

    
Parallelizing CHL construction/PLL comes with a  myriad of challenges. 
Most existing 
approaches \cite{parapll, parallelPLLThesis} attempt to 
construct multiple trees concurrently using parallel threads. 
However, such simple parallelization violates network hierarchy and results in higher label sizes.
Many mission critical applications may require a CHL for a specific network hierarchy. Further, larger label sizes directly impact query performance. Parallelizing label construction over multiple nodes in clusters exacerbates these problems because the hubs generated by a node are not immediately visible to other clusters.
Most importantly, the size of labeling can be significantly larger than the graph itself, stressing the available main memory on a single machine. Although a disk-based labeling algorithm has been proposed previously \cite{jiangDisk}, it is substantially slower than PLL and ill suited to process large networks. To the best of our knowledge, none of the existing parallel approaches resolve this \looseness=-1issue.

In this paper, we systematically address the multiple challenging facets of the parallel CHL construction problem. Two key perspectives drive the development of our algorithmic innovations and  optimizations. First, we approach simultaneous construction of multiple SPTs in PLL as an {\it optimistic parallelization} 
that can result in mistakes. 
We develop PLL-inspired shared-memory parallel Global Local Labeling (GLL) and Distributed-memory Global Local Labeling (DGLL) algorithms \looseness=-1that 
\begin{enumerate}
    \item only make mistakes from which they can recover, and 
    \item efficiently correct those mistakes.
\end{enumerate}

Second, we note that mistake correction in DGLL generates huge amount of label traffic, thus limiting its scalability. Therefore, we shift our focus from parallelizing PLL to the {\it primary problem} of parallel CHL construction.  Drawing insights from the fundamental concepts behind canonical labeling, we develop an \textit{embarrassingly parallel} and \textit{communication avoiding} distributed algorithm called \textbf{PLaNT} (Prune Labels and (do) Not (Prune) Trees). Unlike PLL which prunes SPTs using distance queries but inserts labels for all vertices explored, PLaNT does not prune SPTs but inserts labels selectively. PLaNT 
ensures correctness 
and minimality of output hub labels (for a given $R$) generated
from an SPT 
without consulting previously discovered labels, as shown in fig. \ref{fig:example}. This allows labeling to be partitioned across multiple nodes without increase in communication traffic and enables us to simultaneously scale effective parallelism and memory capacity using multiple cluster nodes. By seamlessly transitioning between PLaNT and DGLL, we achieve both computational efficiency and high scalability.

Overall, our contributions can be summarized as follows:
\begin{itemize}[leftmargin=*]
    \item We develop parallel shared-memory and distributed algorithms that output the minimal hub labeling (CHL) for a given graph $G$ and network hierarchy $R$. None of the existing parallel algorithms guarantee the CHL as output. 
    \item We develop a new embarrassingly parallel algorithm for distributed CHL construction, called PLaNT. PLaNT completely avoids inter-node communication to achieve high scalability at the cost of additional computation. We further propose a hybrid algorithm for \textit{efficient} and \textit{scalable} CHL \looseness=-1construction.
    \item Our algorithms use the memory of multiple cluster nodes in a collaborative fashion to enable completely in-memory processing of graphs whose labels cannot fit on the main memory of a single node. To the best of our knowledge, this is the first work to accomplish this task.
    \item We develop different schemes for label data distribution in a cluster to increase query throughput by utilizing parallel processing power of multiple compute nodes. To the best of our knowledge, none of the existing works use multiple machines to store labeling and compute query \looseness=-1response.
\end{itemize}
 We use $12$ real-world datasets to evaluate our algorithms. Label construction using PLaNT is on an average, $42\times$ faster on $64$ nodes compared to single node execution. Further, our distributed implementation is able to process the LiveJournal\cite{liveJournal} graph with $>100$ GB output label size in $<40$ minutes on $64$ \looseness=-1nodes.

\section {Problem Description and Challenges}

In this paper, given a \textit{weighted} graph $G$ with positive edge weights and a ranking function $R$, we develop solutions to the following three objectives:
\begin{itemize}[leftmargin=*]
    \item P1 $\rightarrow$ Efficiently utilize the available parallelism in shared-memory (multicore CPUs) and distributed-memory (multi node clusters) systems to accelerate CHL construction.
    \item P2 $\rightarrow$ Scale (in-memory) processing to large graphs whose main memory requirements cannot be met by a single \looseness=-1machine.
    \item P3 $\rightarrow$ Given a labeling $L=\cup_{v\in V} L_v$, accelerate query response time 
    by using multi-node parallelism in a cluster.
\end{itemize}
The challenges for parallel construction of canonical labels are manifold--foremost being the  dependency of pruning in an SPT on the labels previously generated by all SPTs rooted at higher ranked vertices. This lends the problem its inherently sequential nature, requiring SPTs to be constructed in a successive manner. For a distributed system, this also leads to high label traffic that is required to efficiently prune the trees on each node. Also note that the average label size can be orders of magnitude greater than the average degree of the graph. Hence, even if the main memory of a single node can accommodate the graph, it may not be feasible to store the complete labeling on each \looseness=-1node.\vspace{-2mm} 


\begin{figure*}[htb]
    \centering
    \subfloat[Graph $G$ with labels from $SPT_{v_1}$ (blue tables), rank $R$ and
    distance \& ancestor initialization]{
        \includegraphics[width=0.2\textwidth]{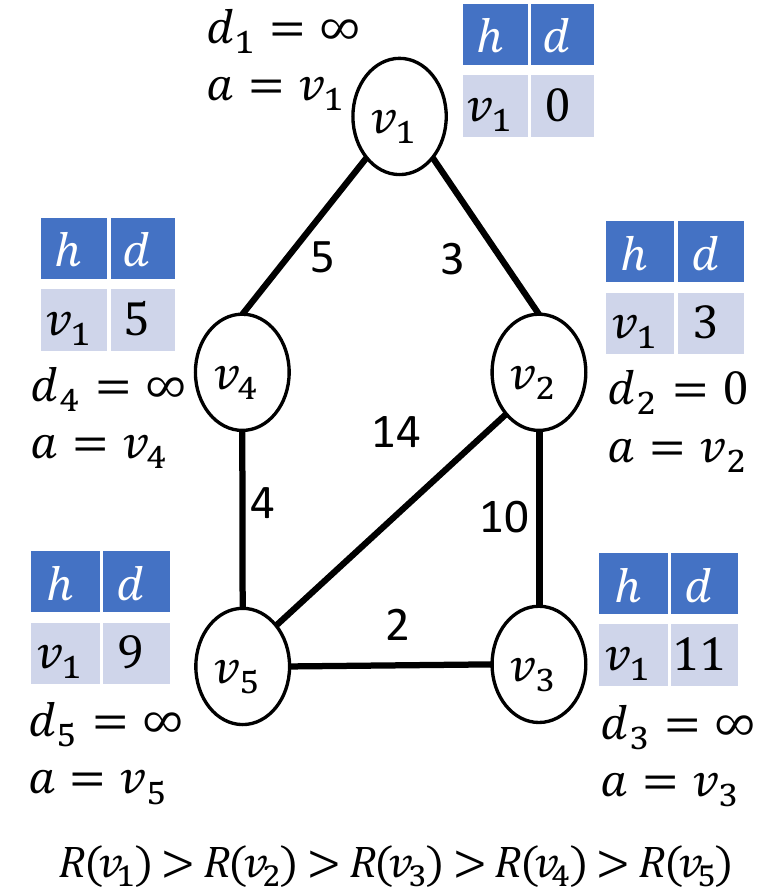}
        \label{fig:initGraph}
    }
    \hfill
    \subfloat[$SPT_{v_2}$ Construction and Label Generation for $G$ in PLL (after $SPT_{v_1}$)]{
        \includegraphics[width=0.75\textwidth]{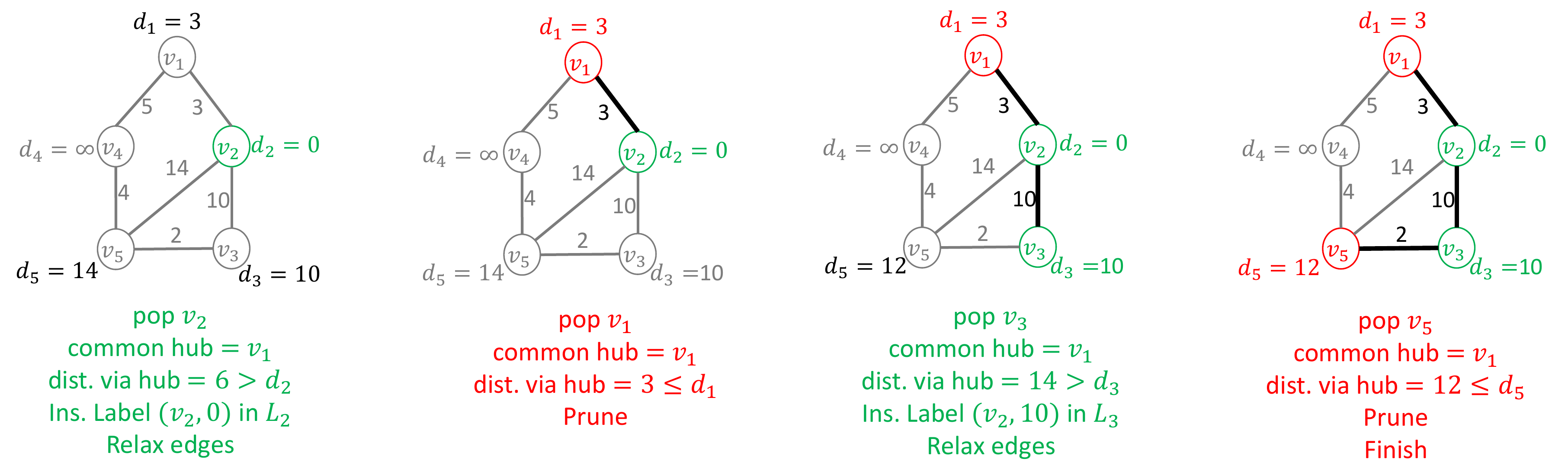}
        \label{fig:pllTree}
    }\\
    \subfloat[$SPT_{v_2}$ Construction and Label Generation for $G$ in PLaNT (after $SPT_{v_1}$)]{
        \includegraphics[width=\textwidth]{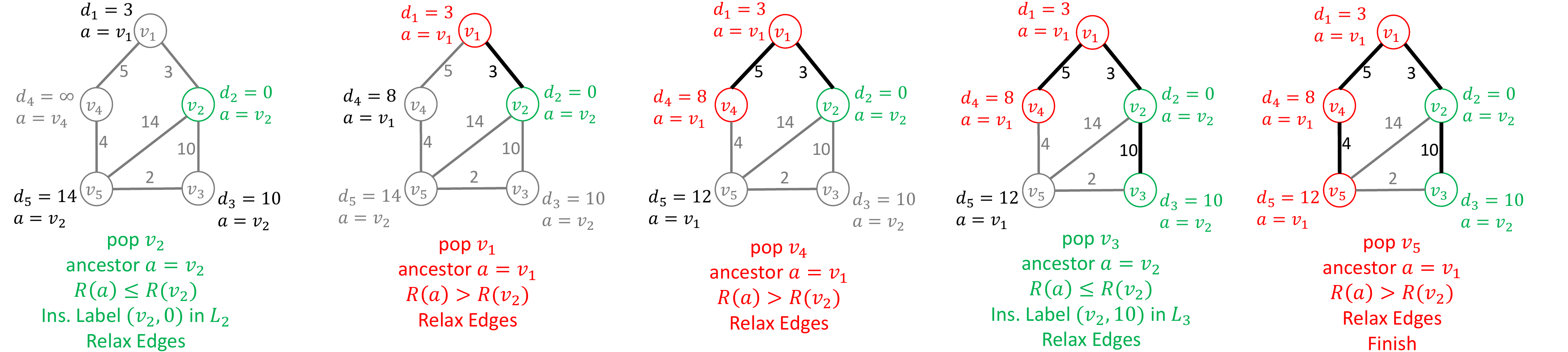}
        \label{fig:plantTree}
    }
    \caption{\looseness=-1 Figures \ref{fig:pllTree} and \ref{fig:plantTree} delineate steps of PLL Dijkstra and PLanT Dijkstra along with the corresponding actions taken
    at each step (Red = label pruned; Green = label generated) for constructing $SPT_{v_2}$.
    For any vertex $v_i$ visited, 
    PLL computes the minimum distance to $v_i$
    through common hubs between $v_2$ and $v_i$. This operation utilizes the previously generated 
    labels from $SPT_{v_1}$ and confirms if $SP_{v_2, v_i}$ is
    already covered by a more important hub.
    Contrarily, PLaNT only uses information intrinsic to $SPT_{v_2}$ 
    by tracking the most important vertex (ancestor $a$) in the shortest path(s) $SP_{v_2, v_i}$.
    PLaNT generates the same (non-redundant) labels as PLL,
    albeit at the cost of additional exploration in $SPT_{v_2}$. \vspace{-2mm}
    }\label{fig:example}
\end{figure*}

\section{Related Work}
Recently, researchers have proposed parallel approaches for hub labeling \cite{parallelPLLThesis, dongPLL, parapll}, many of which are based on PLL. Ferizovic et al.\cite{parallelPLLThesis}
construct a task queue from $R$ such that each thread pops the highest ranked vertex still in the queue and constructs a pruned SPT from that vertex.
They only process unweighted graphs which allows them to use the bit-parallel labeling optimization of \cite{akibaPLL}. Very recently, Li et al.\cite{liSigmod} proposed a highly scalable 
Parallel Shortest distance Labeling (PSL) algorithm.
In a given round $i$, PSL generates 
hub labels with distance $i$ in parallel, replacing the sequential node-order
label dependency of PLL with a distance label dependency. 
PSL is very efficient on and is explicitly designed for 
unweighted small-world networks with low diameter. 
Contrarily, we target a generalized problem of labeling weighted graphs with 
arbitrary diameters, where these approaches are either not applicable or not effective.

Qiu et al.\cite{parapll} argue that existing graph frameworks parallelize a single instance of SSSP and are not suitable for parallelization of PLL. They propose the paraPLL framework that launches concurrent instances of pruned Dijkstra (similar to \cite{parallelPLLThesis}) to process weighted graphs. By using a clever idea of hashing root labels prior to launching an SPT construction, they ensure that despite concurrent tree constructions, the labeling will satisfy the cover property even though it labeling may not be consistent with $R$. While this approach benefits from the order of task assignment and dynamic scheduling, it can lead to significant increase in label size if the number of threads is \looseness=-1large. 

Dong et al. \cite{dongPLL} observe that the sizes of SPTs with high ranked roots is quite large.
They propose a hybrid intra- and inter-tree paralelization scheme utilizing parallel Bellman ford for the large SPTs initially, and concurrent dijkstra instances for small SPTs in the later half of the execution, ensuring average label size close to that of CHL. Their hybrid algorithm works well for scale-free graphs but fails to accelerate high-diameter graphs, such as road networks, due to the high complexity of Bellman Ford. 

paraPLL\cite{parapll} also provides a distributed-memory implementation that statically divides the tasks (root IDs) across multiple nodes. It periodically synchronizes the concurrent processes and exchanges the labels generated so that they can be used by every process for pruning. This generates large amount of label traffic and also introduces a pre-processing vs query performance tradeoff as reducing synchronizations improves labeling time but drastically increases label size. Moreover, paraPLL stores all the labels generated on every node and hence, cannot scale to large graphs despite the cumulative memory of all nodes being enough to store the labels. 

Finally, note that none of the existing parallel approaches \cite{dongPLL, parapll} construct 
the CHL on weighted graphs. All of them generate \textit{redundant labels} (definition \ref{def:redLabel}) and hence do not satisfy the \textit{minimality} property. 


\section{Shared-memory parallel labeling}\label{sec:smp}
\subsection{Label Construction and Cleaning}
In this section, we discuss LCC - a two-step Label Construction and Cleaning  (LCC) algorithm to generate the CHL for a given graph $G(V, E, W)$ and ordering $R$. LCC utilizes shared-memory parallelism and forms the basis for the other algorithms discussed in this paper.

We first define some labeling properties. Recall that a labeling algorithm produces correct results if it satisfies the cover property.  
\begin{definition}\label{def:redLabel}
A hub label $(h, d(v, h))\in L_v$ is said to be \textit{redundant} if it can be removed from $L_v$ without violating the cover property.
\end{definition}
\begin{definition}\label{def:minLabel}
A labeling $L$ satisfies the {\it minimality property} if it has no redundant labels.
\end{definition}

Let $R$ be any network hierarchy. For any pair of connected vertices $u$ and $v$, let $h_{m} = \arg\max_{w\in SP_{u,v}} \{R(w)\}$. 
\begin{definition}\label{def:respectRank}   
A labeling respects $R$ if 
$(h_{m}, d(u,h_{m}))\in L_u$ and $(h_{m}, d(v,h_{m}))\in L_v$, for all connected vertices $u$, $v$.
\end{definition} 

\begin{lemma}\label{lemma:redLabel}
A hub label $(h, d(v, h))\in L_v$ in a labeling that respects $R$ is \textit{redundant} if $h$
is not the highest ranked vertex in $SP_{v,h}$.
\end{lemma}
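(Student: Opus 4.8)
The plan is to show that deleting the label $(h,d(v,h))$ from $L_v$ cannot break the cover property for any connected pair, so that the label is redundant in the sense of Definition \ref{def:redLabel}. First I would reduce the claim to a single family of pairs. A label stored in $L_v$ can only participate in covering pairs of the form $(v,u)$, and by the cover property a vertex can serve as the covering hub of $(v,u)$ only if it lies on $SP_{v,u}$; hence $h$ is irrelevant to the coverage of any pair with $h\notin SP_{v,u}$. It therefore suffices to fix an arbitrary $u$ with $h\in SP_{v,u}$ and exhibit a common hub $g\neq h$ that covers $(v,u)$.

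Second, I would construct this alternative hub directly from the network hierarchy. Let $h_m=\arg\max_{w\in SP_{v,h}}\{R(w)\}$; by the hypothesis that $h$ is not the highest-ranked vertex on $SP_{v,h}$, and since $R$ is a total order, we get $h_m\neq h$ and $R(h_m)>R(h)$. Because $h\in SP_{v,u}$ we have $d(v,u)=d(v,h)+d(h,u)$, so concatenating any shortest $v$--$h$ path with any shortest $h$--$u$ path yields a shortest $v$--$u$ path; choosing a shortest $v$--$h$ path through $h_m$ shows $h_m\in SP_{v,u}$. Now set $g=\arg\max_{w\in SP_{v,u}}\{R(w)\}$. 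Then $R(g)\geq R(h_m)>R(h)$, forcing $g\neq h$. Since the labeling respects $R$, Definition \ref{def:respectRank} applied to the pair $(v,u)$ gives $(g,d(v,g))\in L_v$ and $(g,d(u,g))\in L_u$, so $g$ is a common hub on $SP_{v,u}$. As $g\neq h$, this coverage is untouched by removing $(h,d(v,h))$, and $(v,u)$ remains covered; quantifying over all such $u$ then preserves the cover property, proving redundancy.

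I expect the only delicate point to be the ``shortest path(s)'' reading of the $SP$ notation: the highest-ranked vertex $h_m$ on a shortest $v$--$h$ path need not lie on the particular shortest $v$--$u$ path one might start with, so the membership $h_m\in SP_{v,u}$ must be obtained through the path-concatenation step above rather than assumed. Everything else is bookkeeping: the inequality chain $R(g)\geq R(h_m)>R(h)$ is what separates $g$ from $h$, and the respects-$R$ hypothesis supplies the replacement labels at no cost. One minor sanity check worth recording explicitly is that a labeling respecting $R$ already satisfies the cover property (the top-ranked vertex of every pair is a common hub present in both label sets), so speaking of ``preserving'' that property under the deletion is well posed to begin with.
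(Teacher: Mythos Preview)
Your proposal is correct and follows essentially the same route as the paper's proof: both restrict attention to pairs $(v,u)$ with $h\in SP_{v,u}$, take the top-ranked vertex on $SP_{v,h}$, argue it lies on $SP_{v,u}$, and then invoke the respects-$R$ hypothesis to obtain an alternative covering hub distinct from $h$. Your treatment of the path-concatenation step (why $h_m\in SP_{v,u}$) is in fact more explicit than the paper's, which simply asserts the corresponding membership ``by definition.''
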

\begin{proof}
WLOG, let $w = \arg\max_{u\in SP_{v,h}} \{ R(u) \}$. By assumption, $w \neq h$. By definition, $w \in SP_{v,v'}$, $\forall v' \in S^v_h$, where 
$S^v_h = \{v' | h \in SP_{v',v}$ \}. 
Since the labeling respects $R$, for any $v'$, we must have $(w', d(v, w))\in L_v$ and also $ (w', d(v',w))\in L_{v'}$, where $w'=\arg\max_{u\in SP_{v,v'}} \{ R(u) \}$. Clearly, $R(w')\geq R(w) > R(h)$ which implies that $w'\neq h$. Thus, for every $v'\in S^v_h$, there exists a hub $w'\neq h$ that covers $v$ and $v'$ and $(h, d(v, h))$ can be removed without affecting the cover property.
\end{proof}
    
\begin{lemma}\label{corollary:redLabel}
Given a ranking $R$ and a labeling that respects $R$, a redundant label $(h, d(v, h))\in L_v$ can be detected by a PPSD query between the vertex $v$ and the hub $h$. \end{lemma}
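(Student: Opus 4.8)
The plan is to reduce redundancy detection to the local criterion of Lemma~\ref{lemma:redLabel} and then show that a single PPSD query between $v$ and $h$ evaluates that criterion. Lemma~\ref{lemma:redLabel} already establishes that $(h,d(v,h))\in L_v$ is redundant whenever $h$ is not the highest ranked vertex of $SP_{v,h}$. The converse also holds for a labeling that respects $R$: if $h$ \emph{is} the top-ranked vertex of $SP_{v,h}$, then $h$ is the unique hub that can cover the pair $(v,h)$ (by Definition~\ref{def:respectRank}), so deleting $(h,d(v,h))$ would violate the cover property and the label is \emph{not} redundant. Thus redundancy is equivalent to $h$ not being the top-ranked vertex of $SP_{v,h}$, and I would detect this by running the PPSD query between $v$ and $h$ and inspecting the common hub $w$ that attains the reported distance: the label is redundant exactly when this distance equals $d(v,h)$ and is attained through some $w$ with $R(w)>R(h)$.

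First I would prove the forward direction. Let $w=\arg\max_{u\in SP_{v,h}}\{R(u)\}$. Applying Definition~\ref{def:respectRank} to the connected pair $(v,h)$ yields $(w,d(v,w))\in L_v$ and $(w,d(h,w))\in L_h$, so $w$ is a common hub of $v$ and $h$; and since $w\in SP_{v,h}$ we have $d(v,w)+d(w,h)=d(v,h)$, so the query can realise the true distance through $w$. If $(h,d(v,h))$ is redundant then $w\neq h$, and because $R$ is a total order this gives $R(w)>R(h)$. Hence the query attains $d(v,h)$ via a hub that strictly outranks $h$, and redundancy is flagged.

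For the converse I would show the detector never fires spuriously. Suppose the query attains $d(v,h)$ through a common hub $w$ with $R(w)>R(h)$. Every common hub obeys $d(v,w)+d(w,h)\ge d(v,h)$ by the triangle inequality, so attaining $d(v,h)$ forces equality, which places $w$ on a shortest $v$--$h$ path, i.e.\ $w\in SP_{v,h}$. As $R(w)>R(h)$, the vertex $h$ cannot be the top-ranked vertex of $SP_{v,h}$, and Lemma~\ref{lemma:redLabel} then certifies that $(h,d(v,h))$ is redundant. Combining both directions, the rank of the hub attaining the queried distance decides redundancy, so one PPSD query between $v$ and $h$ suffices.

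I expect the main obstacle to be ruling out false positives in the non-redundant case. Because a labeling that merely respects $R$ may carry redundant labels of its own, some hub $w\neq h$ of rank \emph{lower} than $h$ might also realise $d(v,h)$ and naively suggest that $(h,d(v,h))$ is removable. Anchoring the test to the rank of the attaining hub --- demanding $R(w)>R(h)$ rather than merely $w\neq h$ --- is what neutralises this issue, since whenever $h$ is the top-ranked vertex of $SP_{v,h}$ every common hub lying on a shortest path satisfies $R(w)\le R(h)$ and therefore cannot trigger the detector.
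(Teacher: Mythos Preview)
Your forward direction is essentially the paper's proof: take $w=\arg\max_{u\in SP_{v,h}}R(u)$, use Definition~\ref{def:respectRank} to place $w$ as a common hub of $v$ and $h$ attaining $d(v,h)$, and conclude that a rank-prioritised PPSD query exposes a hub of rank above $R(h)$. The paper stops there; you additionally argue the ``no false positives'' direction, which the paper's short proof omits but which is what actually justifies \texttt{DQ\_Clean}.

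One caveat on your opening equivalence. The claim ``if $h$ is top-ranked in $SP_{v,h}$ then $h$ is the unique hub that can cover $(v,h)$, so $(h,d(v,h))$ is not redundant'' is not airtight for an \emph{arbitrary} labeling that merely respects $R$: a lower-ranked $w\in SP_{v,h}$ may also sit (as an extra label) in both $L_v$ and $L_h$, in which case $(h,d(v,h))$ is removable in the sense of Definition~\ref{def:redLabel} even though $h$ is top-ranked. The paper's proof makes the same tacit leap (it writes ``By Lemma~\ref{lemma:redLabel}, $w'\neq h$'', which is the unproved converse). What both arguments really establish --- and what the cleaning step needs --- is that the rank-thresholded query flags exactly the labels absent from the CHL; your third paragraph proves precisely this and is the cleanest part of the write-up.
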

\begin{proof}
Let $w'=\arg\max_{u\in SP_{v,h}} \{ R(u) \}$. By Lemma~\ref{lemma:redLabel}, $w' \neq h$. Further, since the labeling respects $R$, $w'$ must be a hub for $v$ and $h$. Thus a PPSD query between $v$ and $h$ with rank priority will return hub $w'$ and distance $d(v, w') + d(h,w') = d(v, h)$, allowing us to detect redundant label $(h, d(v, h))$ in $L_v$.
 \end{proof}


 Lemmas \ref{lemma:redLabel} and \ref{corollary:redLabel} show that redundant labels (if any) in a labeling can
 be detected if it respects $R$.
Next, we describe our parallel LCC algorithm
and show how it outputs the CHL. Note that the CHL \cite{abrahamCHL} respects $R$ and is minimal.


The main insight underlying LCC is that simultaneous construction of multiple SPTs can be viewed as an {\it optimistic parallelization} of sequential PLL - that allows some `mistakes' (generate 
labels not in CHL)
in the hub labeling. However, only those mistakes shall be allowed that can be corrected to obtain the CHL. LCC addresses two major parallelization  challenges:

    \begin{itemize}[leftmargin=*]
        \item Label Construction $\rightarrow$ Construct in parallel, a labeling that respects $R$.
        \item Label Cleaning $\rightarrow$ Remove all redundant labels in parallel.
    \end{itemize}

\textbf{\textit{Label Construction:}}
To obtain a labeling that respects $R$, LCC's label construction incorporates a crucial element. In addition to Distance-Query pruning, LCC also performs \textit{Rank-Query pruning} (algorithm \ref{alg:pruneDijkstra}--Line \ref{RQ}). Specifically, during construction of $SPT_v$, if a higher ranking vertex $u$ is visited, we 1) prune $SPT_v$ at $u$ and 2) do not insert a label for $v$ into $L_u$ {\it even if the corresponding Distance-Query might have returned false}. Since LCC constructs multiple SPTs in parallel it is possible that the SPT of a higher ranked vertex which should be a hub for $v$ (for example $u$ above) is still in the process of construction and thus the hub list of $v$ is incomplete. Step 2) above guarantees that for any pair of connected vertices $u, v$ with $R(u) > R(v)$, either $u$ is labeled a hub of $v$ or they both share a higher ranked hub. This fact will be crucial in proving the minimal covering property of LCC after its {\it label cleaning} phase. Note that $v$ might get unnecessarily inserted as a hub for some other vertex due to Rank Pruning at $u$. However, as we will show subsequently, such `optimistic' labels can be cleaned (deleted).

The parallel label construction 
in LCC
is shown in algorithm \ref{alg:LCC}. Similar to \cite{parapll, parallelPLLThesis}, each concurrent thread 
selects the most important unselected vertex from $R$ 
(by atomic updates to a global counter), and constructs the corresponding $SPT$ using pruned Dijkstra.
However, unlike previous works, LCC's pruned Dijkstra is also enabled with Rank Queries in addition to
Distance Queries\footnote{Initialization steps only touch array elements that have been modified in the previous run of Dijkstra. We use atomics to avoid race conditions.}. This parallelization strategy exhibits good load balance as all threads are working until the very last SPT and there is no global synchronization barrier where threads may stall. Moreover, pruned Dijkstra is computationally efficient for various network topologies compared to the pruned Bellman Ford of \cite{dongPLL} which performs poorly on large diameter graphs.
    
\begin{algorithm}[htb]
	\caption{Pruned Dijkstra with Rank Queries (\texttt{pruneDijRQ})}
	\label{alg:pruneDijkstra}
	\begin{algorithmic}[1]
	    \Statex{\textbf{Input:} $G(V, E, W)$, $R$, root $h$, current labels $L=\cup_{v\in V}L_v$; \textbf{Output:} hub labels with hub $h$}    
	    \Statex{$\delta_v\rightarrow$ distance to $v$, $Q\rightarrow$ priority queue}
        \State{$LR=\texttt{hash}(L_h)$, $\delta_h=0, \delta_v=\infty\ \forall v\in V\setminus\{h\}$}\Comment{initialize}
        \State{add $(h,0)$ to $Q$}
        \While{$Q$ is not empty}
            \State{pop $(v, \delta_v)$ from $Q$}
            \If {$R(v)>R(h)$} continue \Comment{Rank-Query} \label{RQ}
            \EndIf 
            \If {\texttt{DQ}$(v, h, \delta_v, LR, L_{v})$} continue \Comment{Dist. Query} 
            \EndIf
            \State{$L_v = L_v \cup \{(h, \delta_v)\}$}
            \ForEach{$u\in N_v$}
                \If{$\delta_v+w_{v,u}<\delta_u$} \State{$\delta_u=\delta_v+w_{v,u}$; update $Q$}
                \EndIf
            \EndForEach
            
        \EndWhile
		\Function{\texttt{DQ}}{$v, h, \delta, LR, L_{v}$} \label{l11} 
		    \ForEach{$(h', d(v,h'))\in L_v$}
		        \If{$(h', d(h,h'))\in LR$} \label{l13}
		            \If{$d(v,h')+d(h,h')\leq \delta$} return \texttt{true} \label{l14}
		            \EndIf
		        \EndIf
		    \EndForEach
		    \State{return \texttt{false}}
		\EndFunction		
	\end{algorithmic}
\end{algorithm}

\begin{algorithm}[htb]
	\caption{LCC: Label Construction and Cleaning}
	\label{alg:LCC}
	\begin{algorithmic}[1]
	    \Statex{\textbf{Input:} $G(V, E, W)$, $R$; \textbf{Output:} $L=\cup_{v\in V}L_v$}
	    \Statex{$p\rightarrow$ \# parallel threads, $t_c\rightarrow$ tree count}
	    \Statex{$Q\rightarrow$ queue containing vertices ordered by rank}
        \State{$L_v=\phi\ \forall\ v\in V$}\Comment{initialization}
        
        \ParFor{$t_{id} = 1,2 ... p$} \Comment{LCC-I: Label Construction}
            \While{$Q\neq\ $empty}
                \State{atomically pop highest ranked vertex $h$ from $Q$}
                \State{\texttt{pruneDijRQ}$(G, R, h, L)$}
            \EndWhile
        \EndParFor
        \ParFor{$v\in V$}   \label{l6}
            \State{sort labels in $L_v$ using hub rank}
        \EndParFor
        \ParFor{$v\in V$}\Comment{LCC-II: Label Cleaning} 
            \ForEach{$(h, \delta_{v,h})\in L_v$}
                \If{\texttt{DQ\_Clean}$(v, h, \delta_{v,h}, L_h, L_v, R)$}
                \State{delete $(h, \delta_{v,h})$ from $L_v$}
                \EndIf
            \EndForEach
        \EndParFor
		\Function{\texttt{DQ\_Clean}}{$v, h, \delta, L_h, L_{v}, R$}\Comment{Cleaning Query} \label{l12}
		\State{compute the set $W$ of common hubs in $L_h$ and $L_v$} 
		 \Statex{ \hspace{\algorithmicindent} such that $d(w,v)+d(w,h)\leq\delta\ \forall\ w\in W$}
		\State{find the highest ranked vertex $u$ in $W$}
		\IfThenElse{$(W=empty)$ or $R(u)\leq R(h)$} {return \texttt{false}\\ \hspace{\algorithmicindent}}{return \texttt{true}}\label{l16}
		\EndFunction		
	\end{algorithmic}
\end{algorithm}
    
\begin{claim}\label{claim:lccCover}
The labeling generated by LCC's label construction step (LCC-I) satisfies the cover property and respects $R$.
\end{claim}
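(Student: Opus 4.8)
The plan is to prove the stronger of the two assertions, namely that the labeling produced by LCC-I \emph{respects} $R$; the cover property then comes for free, since if the top-ranked vertex $h_m$ of $SP_{u,v}$ is a common hub of $u$ and $v$, it is in particular a covering hub lying on their shortest path (Definition~\ref{def:respectRank} directly implies the cover property). So fix a connected pair $u,v$, set $h_m=\arg\max_{w\in SP_{u,v}}\{R(w)\}$, and by symmetry concentrate on establishing $(h_m,d(u,h_m))\in L_u$; the argument for $v$ is identical, and the cases $u=h_m$ or $v=h_m$ are trivial since every vertex is its own hub.

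First I would record two structural facts. (i) \emph{Subpath containment}: every vertex on a shortest $h_m$--$u$ path also lies on a shortest $u$--$v$ path, i.e. $SP_{h_m,u}\subseteq SP_{u,v}$. This is a one-line distance-additivity argument from $h_m\in SP_{u,v}$, and it immediately makes $h_m$ the highest-ranked vertex of $SP_{h_m,u}$ as well. (ii) \emph{Rank invariant}: for every vertex $x$ and every hub $(h',\cdot)\in L_x$ one has $R(h')\ge R(x)$. This is immediate from the Rank-Query (Line~\ref{RQ}): a label with hub $h'$ is written into $L_x$ only during the construction of $SPT_{h'}$ and only if $x$ was not rank-pruned, i.e. only if $R(x)\le R(h')$. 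Crucially this test is local to $SPT_{h'}$ and depends only on the fixed order $R$, so the invariant survives the concurrency of LCC-I.

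The core of the proof is an induction over $SPT_{h_m}$, ordering the vertices of $SP_{h_m,u}$ by their distance $d(h_m,\cdot)$ (which, by positivity of the weights, is the order in which Dijkstra finalizes them). I would show that every $x\in SP_{h_m,u}$ is popped at the correct distance $\delta_x=d(h_m,x)$ and is neither rank- nor distance-pruned, so that $(h_m,d(h_m,x))$ is inserted into $L_x$ and exploration continues through $x$. The Rank-Query cannot fire because $R(x)\le R(h_m)$ by (i). For the Distance-Query, suppose \texttt{DQ} at $x$ returned \texttt{true} through a common hub $h'$; then $d(x,h')+d(h_m,h')\le d(h_m,x)$, which by the triangle inequality is an equality and forces $h'\in SP_{x,h_m}\subseteq SP_{h_m,u}$, whence $R(h')\le R(h_m)$ by (i). But $h'$ is a hub of $h_m$, so the rank invariant (ii) gives $R(h')\ge R(h_m)$, leaving $h'=h_m$ as the only possibility. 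Since the only SPT that ever writes $h_m$ as a hub is $SPT_{h_m}$ itself and $x$ has not yet been labeled at this moment, $h_m\notin L_x$, so \texttt{DQ} in fact returns \texttt{false}. The induction hypothesis guarantees that the predecessors of $x$ on a shortest path relaxed their neighbors, pinning $\delta_x=d(h_m,x)$; the base case is the root $h_m$ at distance $0$, added as its own hub. Specializing to $x=u$ (and $x=v$) yields $(h_m,d(u,h_m))\in L_u$ and $(h_m,d(v,h_m))\in L_v$, which is exactly respecting $R$.

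I expect the main obstacle to be precisely the Distance-Query step \emph{under concurrency}: a priori a neighbouring SPT could have deposited into $L_x$ some lower-ranked hub that spuriously satisfies the distance inequality and prunes the $h_m$--$u$ path before $u$ is reached. The resolution is the interplay of (i) and (ii), which pins any such pruning hub to $h_m$ itself---a hub provably absent from $L_x$ while $SPT_{h_m}$ is still running. A secondary point to handle carefully is that pruning can in principle truncate Dijkstra's reach, so distance-correctness $\delta_x=d(h_m,x)$ must be \emph{carried through} the induction rather than assumed; the hashed snapshot $LR$ of $L_{h_m}$ taken at the start of $SPT_{h_m}$ only shrinks the candidate-hub set and therefore cannot threaten the argument.
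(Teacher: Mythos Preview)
Your proof is correct and the technical core---the rank invariant (ii) forced by the Rank-Query, together with triangle-inequality localization of any offending common hub onto $SP_{h_m,u}$---is exactly the same insight the paper exploits. The \emph{route}, however, is different. The paper argues the contrapositive: it shows that whenever $h\notin H^P_v$ (by rank-prune, distance-prune, or non-discovery) one also has $h\notin H^S_v$, concludes $H^S_v\subseteq H^P_v$, and then invokes the correctness of sequential PLL/CHL as a black box to inherit both ``respects $R$'' and the cover property. You instead run a direct induction along $SPT_{h_m}$ for the specific top-ranked vertex $h_m$ and prove $(h_m,d(u,h_m))\in L_u$ from first principles, without reducing to sequential PLL. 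Two things your approach buys: it is self-contained (no appeal to external correctness of CHL), and it makes distance-correctness $\delta_x=d(h_m,x)$ explicit as part of the induction, whereas the paper's case analysis treats the hub \emph{set} $H^P_v$ and leaves the matching of the recorded distance to $d(h_m,u)$ implicit in its Case~2/Case~3 split. What the paper's route buys is brevity and the stronger statement $H^S_v\subseteq H^P_v$ for \emph{all} $v$, which feeds directly into the subsequent Cleaning claim.
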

\begin{proof}
\looseness=-1 Let $H^P_v$ ($H^S_v$, resp.) denote the set of hub vertices of a vertex $v$ after LCC-I (sequential PLL, resp.). We will show that $H^S_v \subseteq H^P_v$.
Suppose $h \not\in H_v^P$ for some vertex $h$. 
Consider three cases:
 
\noindent {\bf Case 1}: $h \not\in H_v^P$ because a Rank-Query pruned $SPT_h$ at $v$ in LCC-I. Thus we must have $R(v) > R(h)$. Since sequential PLL is also the CHL,  $h \not\in H_v^S$ also.


\noindent {\bf Case 2}: $h \not\in H_v^P$ because a Distance-Query pruned $SPT_h$ at $v$ in LCC-I. This can only happen if LCC found a shorter distance $d(h,v)$ through a hub vertex $h' \in SP_{h,v}$  (alg.~\ref{alg:pruneDijkstra} : lines~\ref{l13}-\ref{l14}). Since LCC with Rank-Querying identified $h'$ as a hub for both $h$ and $v$, we must have $R(h') > R(h) > R(v)$ and thus $h \not\in H_v^S$.

\noindent {\bf Case 3}: $h \not\in H_v^P$ because $v$ was not discovered by $SPT_h$ due to some vertex $u$ being pruned. Similar to Case 2 above, this implies $\exists h' \in SP_{v,h}$ with $R(h') > R(h)$ and therefore $h \not\in H_v^S$.   

Combining these cases, we can say that $H^S_v \subseteq H^P_v$. Since sequential PLL also generates the CHL for $R$, the claim \looseness=-1follows.
\end{proof}

\textbf{\textit{Label Cleaning:}}
Note that LCC creates some extra labels due to the parallel construction of $SPT's$. For example, $v$ might get (incorrectly) inserted as a hub for vertex $w$ if 
the $SPT_{v'}$ for a higher ranked vertex $v'\in SP_{v,w}$ is still under construction and $v'$ has not yet been inserted as a hub for $w$ and $v$. These extra labels are  redundant, since there exists a canonical subset of LCC  
(i.e. $H_V^S$) 
satisfying the cover property, and so do not affect PPSD queries. LCC eliminates redundant labels using the $DQ_Clean$ function alg~\ref{alg:LCC}-lines~\ref{l12}-\ref{l16}\footnote{Instead of computing full set intersection, the actual implementation of \texttt{DQ\_Clean} stops at the first common hub (also the highest ranked) in sorted $L_h$ and $L_v$  which satisfies the condition in line \ref{l16} of algortihm \ref{alg:LCC}.}- For vertex $v$, a label $(h, d(v,h))$ is redundant if a Distance-query $(v, h, d(v,h) )$ returns \texttt{true} with a hub $u$ with $R(u)>R(h)\}$. 

\begin{claim}
The final labeling generated by LCC after the Label Cleaning step (LCC-II) is the CHL.
\end{claim}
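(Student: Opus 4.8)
The plan is to prove the final labeling equals the CHL by showing that LCC-II deletes precisely the non-canonical labels from the labeling $L^P$ produced by LCC-I. I would first record the characterization that drives everything: by definition of canonical labeling, $(h,d(v,h))$ is a CHL label of $v$ exactly when $h=\argmax_{w\in SP_{v,h}}R(w)$, i.e.\ $h$ is the highest ranked vertex on $SP_{v,h}$. Claim~\ref{claim:lccCover} already gives that $L^P$ respects $R$ and that $H^S_v\subseteq H^P_v$, so every canonical label is present in $L^P$. Hence it suffices to prove that \texttt{DQ\_Clean} removes every non-canonical label of $L^P$ and keeps every canonical one; this ``delete iff non-canonical'' dichotomy immediately yields $L_v=(\text{CHL labels of }v)$ for every $v$.

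For the forward direction, a non-canonical label $(h,d(v,h))$ is redundant by Lemma~\ref{lemma:redLabel}, and I would detect it exactly as in Lemma~\ref{corollary:redLabel}: letting $w'=\argmax_{u\in SP_{v,h}}R(u)\neq h$, the fact that $L^P$ respects $R$ makes $w'$ a common hub of $v$ and $h$ with $d(v,w')+d(w',h)=d(v,h)$ and $R(w')>R(h)$, so \texttt{DQ\_Clean} returns \texttt{true} and deletes the label. For the reverse direction, suppose $(h,d(v,h))$ is canonical but some common hub $u$ satisfies $d(v,u)+d(u,h)\leq d(v,h)$ with $R(u)>R(h)$. Since $d(v,u)+d(u,h)\geq d(v,h)$ always holds, the inequality is an equality, placing $u$ on a shortest path, i.e.\ $u\in SP_{v,h}$; but then $R(u)>R(h)$ contradicts $h$ being highest ranked on $SP_{v,h}$. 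Hence no such $u$ exists and \texttt{DQ\_Clean} returns \texttt{false}, so the canonical label survives.

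The step I expect to be the main obstacle is that LCC-II cleans in parallel, so the label sets $L_h,L_v$ consulted by \texttt{DQ\_Clean} are themselves being pruned during the phase; I must ensure the dichotomy is independent of the deletion order. The reverse direction is robust automatically, since removing labels can only shrink the candidate hub set while the canonical test already fails over the full $L^P$. For the forward direction I must guarantee the detector $w'$ is still present when $(h,d(v,h))$ is examined, which I would do by showing $w'$ is itself canonical for both $v$ and $h$ and therefore never deleted. Concretely, if $y$ lies on a shortest $w'$--$h$ path then the concatenation $v\to w'\to y\to h$ has length $d(v,w')+d(w',h)=d(v,h)$, so $y\in SP_{v,h}$ and $R(y)\leq R(w')$; thus $w'$ is highest ranked on $SP_{w',h}$, and the symmetric concatenation shows the same for $SP_{v,w'}$. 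Hence $w'$ is a canonical hub of both endpoints, survives cleaning by the reverse direction, and acts as a stable detector regardless of order.

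Combining the two directions with this order-independence, LCC-II leaves each $L_v$ equal to the set of canonical labels of $v$, which is by definition the CHL, completing the proof.
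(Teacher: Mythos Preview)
Your argument is correct and follows the same skeleton as the paper's proof: invoke Claim~\ref{claim:lccCover} to get that LCC-I respects $R$ and contains the CHL, then use Lemmas~\ref{lemma:redLabel} and~\ref{corollary:redLabel} to justify that cleaning removes exactly the non-canonical labels. The paper's own proof is a three-line appeal to these lemmas and stops at ``minimal, hence the CHL.''

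Where you go beyond the paper is in two places the paper simply elides. First, you explicitly verify the reverse direction---that a canonical label $(h,d(v,h))$ cannot trigger \texttt{DQ\_Clean}---via the observation that any qualifying common hub $u$ would lie on $SP_{v,h}$ with $R(u)>R(h)$, contradicting canonicity; the paper never states this. Second, and more substantively, you address the order-independence of the parallel cleaning loop by proving that the detector $w'=\argmax_{u\in SP_{v,h}}R(u)$ is itself a canonical hub for both $v$ and $h$ (your concatenation argument), hence is never deleted and remains available regardless of scheduling. The paper's proof is silent on this concurrency issue. So your route is the same, but your version actually closes gaps that the paper's terse proof leaves open.
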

\begin{proof}
From claim \ref{claim:lccCover}, we know that the labeling after LCC-I respects $R$. Lemma \ref{corollary:redLabel} implies that LCC-II can be used to detect and remove all redundant labels. Hence, the final labeling generated by LCC is minimal and by definition, the CHL.
\end{proof}
    
\begin{lemma}\label{lemma:complexityLCC}
LCC is work-efficient. It performs \\$\mathcal{O}(wm\log^2{n} + w^2n\log^2{n})$ work, generates $O(wn\log{n})$ hub labels and answers each query in $\mathcal{O}(w\log{n})$ time, where $w$ is the tree-width of $G$. 
\end{lemma}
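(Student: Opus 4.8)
The plan is to prove the four assertions — work-efficiency, total work $\mathcal{O}(wm\log^2 n + w^2 n\log^2 n)$, label count $\mathcal{O}(wn\log n)$, and per-query time $\mathcal{O}(w\log n)$ — by first pinning down a single structural quantity, the per-vertex label size, and then reading off each remaining bound from it. The natural starting point is the fact (established in the preceding claims) that LCC outputs exactly the CHL for $R$, so it suffices to analyze CHL label sizes together with the overhead LCC pays to produce and then clean them. I would take $R$ to be a nested-dissection ordering induced by a balanced tree decomposition of width $w$, which is the regime in which a tree-width bound is meaningful.

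First I would establish that every vertex receives $|L_v| = \mathcal{O}(w\log n)$ hubs. Using a balanced tree decomposition, a recursive balanced separator of size at most $w+1$ splits the graph into parts of at most a constant fraction of the vertices, giving $\mathcal{O}(\log n)$ recursion levels. For any pair $u,v$, the highest-ranked vertex of $SP_{u,v}$ lies in the separator of the first level at which $u$ and $v$ are split; since $v$ can acquire a canonical hub only from a separator lying above it in the recursion, and there are $\mathcal{O}(\log n)$ such separators each contributing at most $w+1$ candidate hubs, we obtain $|L_v| = \mathcal{O}(w\log n)$. Summing over $v$ yields the $\mathcal{O}(wn\log n)$ total label count, and since a PPSD query merges two rank-sorted label lists of this size, each query costs $\mathcal{O}(w\log n)$.

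Next I would bound the two LCC phases separately. For construction (LCC-I), the key identity is that $v$ is expanded (its neighbors relaxed) in $SPT_h$ precisely when $h$ becomes a hub of $v$, so the number of (root, expanded-vertex) pairs equals the total label count, and the total number of edge relaxations is $\sum_v \deg(v)\,|L_v| = \mathcal{O}(w\log n)\sum_v \deg(v) = \mathcal{O}(wm\log n)$. Charging the $\mathcal{O}(\log n)$ cost of each heap pop and decrease-key gives the $\mathcal{O}(wm\log^2 n)$ term. For cleaning (LCC-II), each of the $\mathcal{O}(wn\log n)$ labels triggers one cleaning query of cost $\mathcal{O}(|L_v|+|L_h|) = \mathcal{O}(w\log n)$ over the rank-sorted lists, which multiplies out to the $\mathcal{O}(w^2 n\log^2 n)$ term; the preliminary label sort costs $\sum_v |L_v|\log|L_v| = \mathcal{O}(wn\log^2 n)$ and is absorbed. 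Summing the phases yields the stated bound, and since sequential PLL performs the same asymptotic work, LCC is work-efficient.

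The main obstacle is the accounting of the \emph{pruned frontier} in LCC-I together with the extra ``optimistic'' labels created by parallel construction. Each expanded vertex also causes its neighbors to be popped and then pruned by a rank- or distance-query, and the distance-pruned neighbors each pay $\mathcal{O}(|L_v|)$; I must show that this frontier work (and the distance-query work on labeled vertices) is dominated by the two stated terms rather than contributing an extra factor of $w$. Here I would lean on the rank-sorted, hashed representation of $L_h$ so that each distance query halts at the first (highest-ranked) common hub, and on the identity $|\{h : v \text{ expanded in } SPT_h\}| = |L_v|$ to bound the number of frontier incidences by $\sum_v \deg(v)|L_v|$. I must also verify that the intermediate (pre-cleaning) labeling of LCC-I, though it may contain redundant labels, still has size $\mathcal{O}(wn\log n)$, so that neither LCC-I storage nor the LCC-II query count exceeds the claimed bounds — and this is exactly where the Rank-Query pruning of Claim~\ref{claim:lccCover} does the heavy lifting, by preventing any vertex below a separator from collecting spurious hubs.
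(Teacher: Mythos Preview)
Your high-level approach matches the paper's: take the centroid decomposition of a width-$w$ tree decomposition, rank by bag depth, deduce $|L_v|=\mathcal{O}(w\log n)$, and read off the label, query, relaxation, and cleaning bounds. The label-cleaning analysis and the $\mathcal{O}(wm\log n)$ relaxation count via $\sum_v \deg(v)|L_v|$ are exactly what the paper does.

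The gap is precisely the obstacle you flag but do not close: the cost of distance queries on \emph{popped} (not just expanded) vertices. Your proposed bound on frontier incidences, $\sum_v \deg(v)|L_v| = \mathcal{O}(wm\log n)$, is correct as a count of pops, but multiplying by the $\mathcal{O}(w\log n)$ cost of each distance query yields $\mathcal{O}(w^2 m\log^2 n)$, which is strictly weaker than the claimed $\mathcal{O}(w^2 n\log^2 n)$ term. Early termination of the hashed query at the first common hub does not help in the worst case, since the loop over $L_v$ may still exhaust the list before finding a qualifying hub.

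The fix the paper relies on (tersely) is structural, not charging-based: Rank-Query pruning confines $SPT_h$ to vertices whose shallowest bag $b_v$ lies in the subtree rooted at $b_h$ in the centroid tree (anything shallower is rank-pruned before a distance query is issued). Hence any fixed $v$ can receive a distance query only from roots $h$ with $b_h$ an ancestor of $b_v$, of which there are $\mathcal{O}(w\log n)$. This bounds the \emph{total number of distance queries}, over all trees and including pruned frontier vertices, by $\mathcal{O}(nw\log n)$ directly, and multiplying by $\mathcal{O}(w\log n)$ per query gives the $\mathcal{O}(w^2 n\log^2 n)$ term. In other words, the same ancestor argument that caps the label count also caps the pop count; you should reuse it rather than route through $\sum_v \deg(v)|L_v|$.
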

\begin{proof}
Consider the centroid decomposition\\ $(\chi, T(V_T, E_T))$ of minimum-width tree decomposition of the input graph $G$, where $\chi=\{X_t \subseteq V\ \forall\ t\in V_T\}$ maps vertices in $T$ (bags) to subset of vertices in $G$ \cite{akibaPLL}. Let $R(v)$ be determined by the minimum depth bag $\{b_v\in V_T\ |\ v\in X_{b_v}\}$ i.e. vertices in root bag are ranked highest followed by vertices in children of root and so on. Since we prune using \textit{Rank-Query}, $SPT_v$ will never visit vertices beyond the parent of $b_v$. A bag is mapped to at most $w$ vertices and 
the depth of $T$ is $\mathcal{O}(\log{n})$.
Since the only labels inserted at a vertex are its ancestors in the centroid tree, there are $\mathcal{O}(w \log n)$ labels per vertex.

Each time a label is inserted at a vertex, we evaluate all its neighbors in the distance queue. Thus the total number of distance queue operations is $\mathcal{O}(wm\log{n})$.
Further, distance queries are performed on vertices that cannot be pruned by rank queries. This results in $\mathcal{O}(n \cdot w\log{n}\cdot w\log{n})=\mathcal{O}(w^2n\log^2{n})$ work.
   
    
Label Cleaning step sorts the label sets and executes PPSD queries performing $\mathcal{O}(nw\log{n}\log{w}\log{\log{n}} + w^2n\log^2{n}) = \mathcal{O}(w^2n\log^2{n})$ work. Thus, overall work complexity of LCC is $\mathcal{O}(wm\log^2{n} + w^2n\log^2{n})$ which is the same as the sequential 
algorithm \cite{parapll}, making LCC work-efficient.
    \end{proof}

Note that paraPLL\cite{parapll} generated labeling is not guaranteed to respect $R$ and hence, doing Label Cleaning
after paraPLL may result in a labeling that violates the cover property. However, it can be used to clean the output
of inter-tree parallel algorithm by Dong et al \looseness=-1\cite{dongPLL}.
    
Although LCC is theoretically efficient, in practice, the Label cleaning step adds non-trivial overhead to the execution time. In the next subsection, we describe a Global Local Labeling (GLL) algorithm that drastically reduces the overhead of cleaning. 

    
\subsection{Global Local Labeling (GLL)}\label{sec:gll}

The main goal of the GLL algorithm is to severely restrict the size of label sets used for PPSD queries during Label 
Cleaning. A natural way to accelerate label cleaning is by avoiding futile computations (in \texttt{DQ\_Clean}) over hub labels that were already consulted during label 
construction. However, to achieve notable speedup, 
these pre-consulted labels must be skipped in constant time without actually iterating over all of them.
    
GLL overcomes this challenge by using a novel Global Local Label Table data structure and interleaved cleaning strategy. As opposed to LCC, GLL utilizes multiple synchronizations where the threads switch between label construction and cleaning.
We denote the combination of a Label Construction and corresponding Label Cleaning step as a \textit{superstep}. During label construction, the newly generated labels are pushed to a \textit{Local Label Table} and the volume of labels generated is tracked. Once the number of labels in the local table becomes greater than $\alpha n$, where $\alpha > 1$
is the synchronization threshold, the threads synchronize, sort and clean the labels in local table and commit them to the \textit{Global Label Table}. 

In the next superstep, it is known that all labels in the global table are consulted\footnote{For effective pruning, the Label Construction step uses both global and local table to answer distance queries.} during the label generation. Therefore, the label cleaning only needs to query for redundant labels on the local table, thus dramatically reducing the number of repeated computations in PPSD queries. After a label construction step, the local table holds a total of $\alpha n$ labels. Assuming $\mathcal{O}(\alpha)$ average labels per vertex,  
(we empirically observe that labels are almost uniformly distributed across the vertices except the few highest ranked vertices), each cleaning step should perform on average $\mathcal{O}(n\alpha^2)$ work. The number of cleaning steps is $\mathcal{O}\left(\frac{wn\log{n}}{\alpha n}\right)$ and thus we expect the total complexity of cleaning to be $\mathcal{O}(n\alpha w\log{n})$ in GLL as opposed to $\mathcal{O}(nw^2\log^2{n})$ in LCC. If the constant $\alpha \ll w\log{n}$, cleaning in GLL is more efficient than LCC.
    
Using two tables also drastically reduces locking during pruning queries. Both paraPLL and LCC have to lock label sets before reading because label sets are dynamic arrays that can undergo memory (de)allocation when a label is appended. However, GLL only appends to the local table. Most pruning queries are answered by label sets in the global table that do not need to be locked. 
    




\section{Distributed-memory Hub Labeling}\label{sec:dmp}
A distributed algorithm allows the application to scale beyond the levels of parallelism and the main memory offered by a single node. This is particularly useful for hub labeling as it is extremely memory intensive and computationally demanding, rendering off-the-shelf shared-memory systems inapt 
for processing large-scale graphs. However, a distributed-memory system also presents strikingly different challenges than a shared-memory system, in general as well as in the specific context of hub labeling. Therefore, a trivial extension of GLL algorithm is not suitable for a multi-node \looseness=-1cluster. 

Particularly, the labels generated on a node are not readily available to other nodes until nodes synchronize and exchange labels. 
Further, unlike paraPLL, our aim is to harness not just the compute but also the collective memory capability of multiple nodes to construct CHL for large graphs. This mandates that labels be partitioned and distributed across multiple nodes at all times, and severely limits the knowledge of SPTs created by other nodes even after synchronization. This absence of labels dramatically affects the pruning efficiency during label construction, resulting in large number of redundant labels and consequently, huge communication volume that bottlenecks the pre-processing. 

In this section, we will present novel algorithms and optimizations that systematically conquer these challenges. We begin the discussion with a distributed extension of GLL that highlights the basic data distribution and parallelization approach.\vspace{-1mm}

\subsection{Distributed GLL (DGLL)}\label{sec:dgll}
\looseness=-1 The distributed  GLL (DGLL) algorithm 
divides the task queue for SPT creation uniformly among $q$ nodes in a rank circular manner. The set of root vertices assigned to node $i$ is $TQ_i = \{ v \ \ |  R(v)\bmod q = i \}$.
Every node loads the complete graph instance and executes GLL on its alloted task queue\footnote{Every node also stores a copy of complete ranking $R$ for rank queries.}. DGLL has two key optimizations tailored for distributed  implementation: 

\textbf{\textit{1. Label Set Partitioning:}} In DGLL, nodes only store labels generated {\it locally} i.e. all labels at node $i$ are of the form $(h, d(v, h))$, where $h \in TQ_i$. Equivalently, the labels of a vertex $v$ are disjoint and distributed across nodes i.e. $L_v = \cup_i L_{i,v}$. Thus, all the nodes collaborate to provide main memory space for storing the labels and the {\it effective memory scales in proportion to the number of nodes}. This is in stark contrast with paraPLL that stores  $\{\cup_{v\in V}L_v\}$ on every node, rendering effective memory same as that of a single \looseness=-1node.



\textbf{\textit{2. Synchronization and Label Cleaning}}: For every superstep in DGLL, we decide the synchronization point apriori in terms of the number of SPTs to be created. The synchronization point computation is motivated by the label generation behavior of the algorithm. Fig.~\ref{fig:labelVsRank} shows that the number of labels generated by initial SPTs rooted at high rank vertices is very large and it drops exponentially as the rank decreases. To maintain cleaning efficiency with few synchronizations, we increase the number of SPTs constructed in the supersteps by a factor of $\beta$ i.e. if superstep $i$ constructs $x$ SPTs, superstep $i+1$ will construct $\beta\cdot x$ SPTs. This is unlike distributed paraPLL\cite{parapll} where same number of trees are constructed in every superstep.
 
After synchronization, the labels generated in a superstep are broadcasted to all nodes. Each node creates a bitvector containing response of all cleaning queries. The bitvectors are then combined using an all reduce operation to obtain final redundancy \looseness=-1information.

Note that DGLL uses both global and local tables to answer cleaning queries.
Yet, interleaved cleaning is beneficial as it removes redundant labels, thereby reducing query response time for future cleaning steps. While label construction queries only use tables on generator node, cleaning queries use tables on all nodes for every query. The presence of redundant labels can thus, radically slow down cleaning. For some datasets, we empirically observe $>90\%$ redundancy in labels generated in some \looseness=-1supersteps.

\begin{figure}[htbp]
    \centering
 \subfloat{\includegraphics[width=0.5\linewidth]{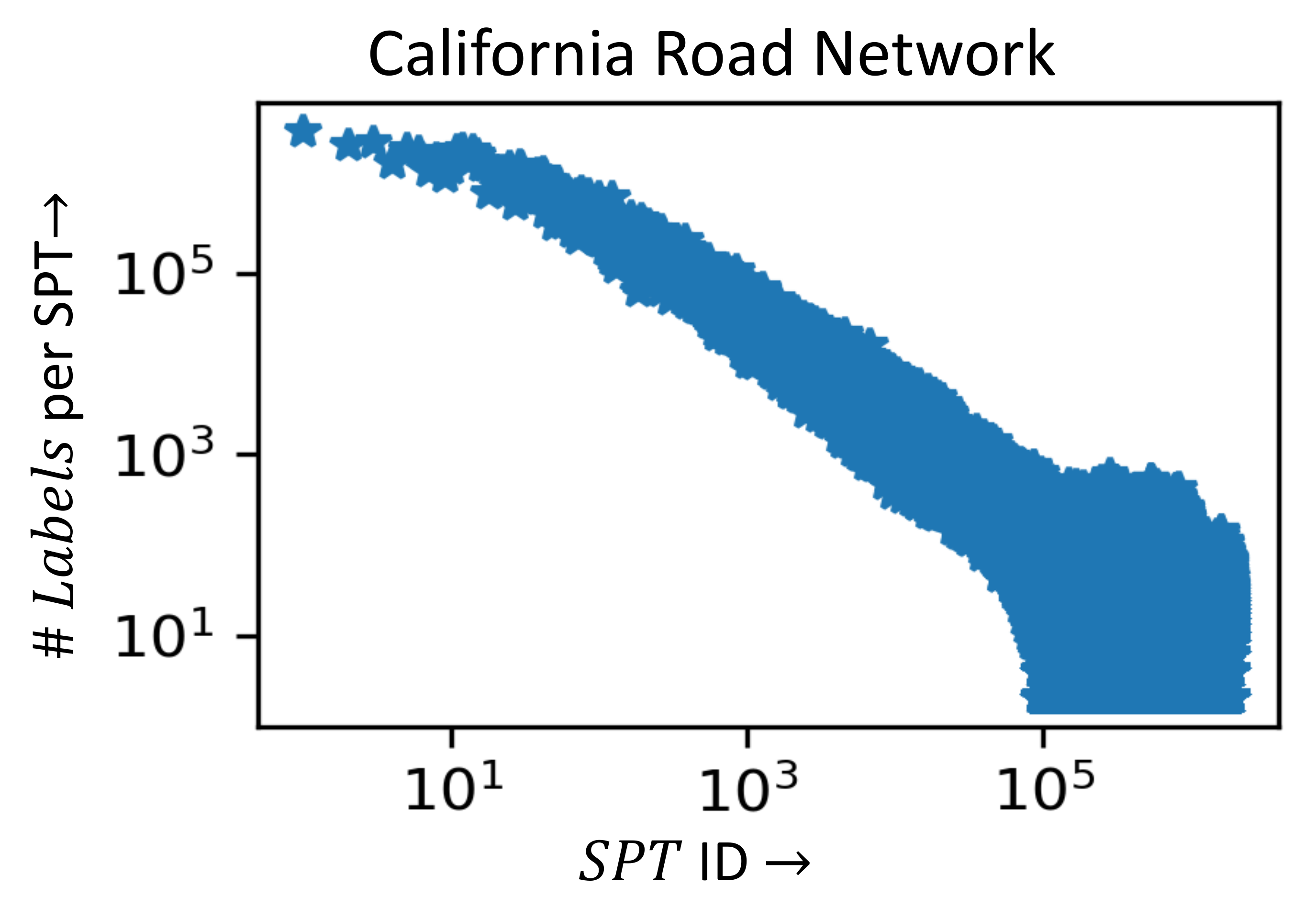}}
 ~~
 \subfloat{\includegraphics[width=0.5\linewidth]{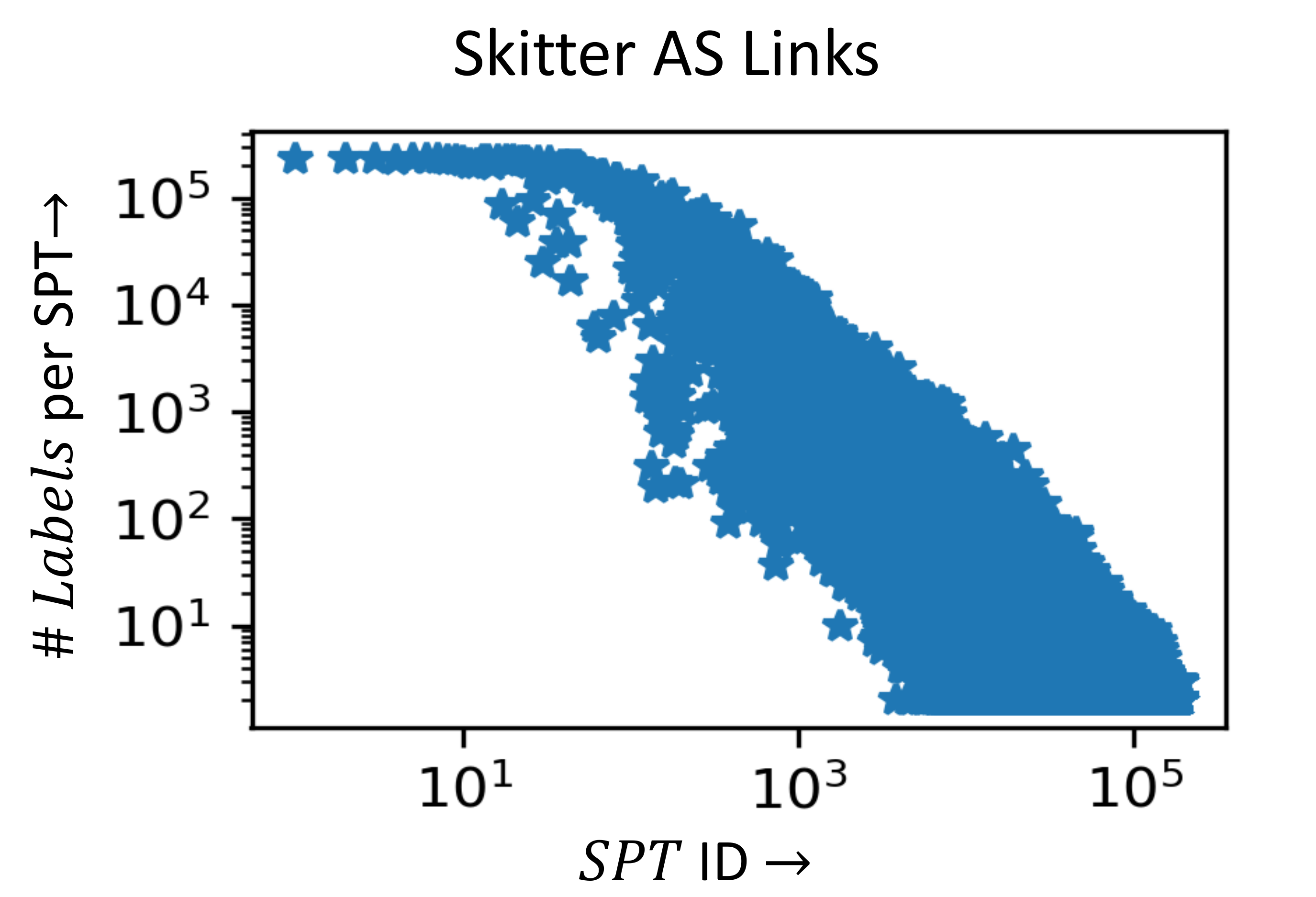}}
 
    \caption{Labels generated by SPTs. ID of $SPT_v$ is $n-R(v)$.\vspace{-2mm}}
    \label{fig:labelVsRank}
\end{figure}

\subsection{Prune Labels and (do) Not (prune) Trees \looseness=-1(PLaNT) }\label{sec:plant}
The redundancy check in DGLL can severely restrict scalability of the algorithm due to huge label broadcast traffic (redundant $+$ non-redundant labels), motivating the need for an algorithm that can avoid redundancy without communicating with other nodes. 

To this purpose, we propose the Prune Labels and (do) Not (prune) Trees (PLaNT) algorithm that 
accepts some loss in pruning efficiency to achieve a dramatic decrease in communication across nodes in order
by outputting completely non-redundant labels without additional label cleaning. We note that the redundancy of a hub label $(h, d(v, h))\in L_v$ is only determined by whether or not $h$ is the highest ranked vertex in $SP_{v,h}$. 
This is the key idea behind PLaNT: When constructing $SPT_h$, if, when resolving distance queries, embedded information about high-ranked vertices on paths can be retrieved, $SPT_h$ will {\it intrinsically} have the requisite information to detect redundancy of $h$ as a \looseness=-1hub. 

Algorithm \ref{alg:plant} (PLaNTDijkstra) depicts the construction of a shortest path tree using PLaNT, which we call PLaNTing trees. Instead of pruning using distance or rank queries, PLaNTDijkstra tracks the {\it most important ancestor} $a[v]$ encountered on the path from $h$ to $v$ by allowing ancestor values to propagate along with  distance values. When $v$ is popped from the distance queue, a label is added to $L_v$ if neither $v$ nor $a[v]$ are ranked higher than the root. Thus, for any shortest path $SP_{h, v}$, only $h_m=\argmax_{w\in SP_{h,v}} \{ R(w) \}$ succeeds in adding itself to the labels of $u$ and $v$, {\it guaranteeing minimality of the labeling while simultaneously respecting $R$}. Figure \ref{fig:plantTree} provides a detailed graphical illustration of label generation using PLaNT and 
shows that it generates the same labeling as the canonical PLL. 

If there are multiple shortest paths from $h$ to $v$, the path with the highest-ranked ancestor is selected.
This is achieved in the following manner: when a vertex $v$ is popped from the dijkstra queue and its edges are relaxed, the
ancestor of a neighbor $u\in N_v$ is allowed to update even if the newly calculated tentative distance to $u$ is \textit{equal} 
to the currently assigned distance to $u$ (line 12 of algorithm \ref{alg:plant}). For example, in fig.\ref{fig:plantTree}, the 
shortest paths to $v_5$,  $P_1 = \{v_2, v_1, v_4, v_5\}$ and $P_2 = \{v_2, v_3, v_5\}$ have the same length and $P_1$ is
selected by setting $a[v_5]=v_1$ because $R(v_1)>R(v_2)$.

Note that PLaNT not only avoids dependency on labels on remote nodes, but it rids SPT construction of \textit{any dependency} on the output of other SPTs, effectively providing an \textbf{embarassingly parallel} solution for CHL construction with $\mathcal{O}(m + n\log{n})$ depth (complexity of a single instance of dijkstra) and $\mathcal{O}(mn + n^2\log{n})$ \looseness=-1work. 
Due to its embarassingly parallel nature, PLaNT does not require SPTs to be constructed in a specific order. However, to enable optimizations discussed later, we follow the same rank determined order in PLaNT as used in DGLL (section \ref{sec:dgll}). 

\textbf{\textit{Early Termination:}} To improve the computational efficiency of PLaNT and prevent it from exploring the full graph $G$ for every SPT, we propose the following simple \textit{early termination} strategy: stop further exploration when the rank of either the ancestor or the vertex itself is higher than root for \textit{all} vertices in dijkstra's distance queue \footnote{Further exploration from such vertices will only result in shortest paths with at least one vertex ranked higher than the root and hence, no labels will be generated.}. Early termination has the potential to dramatically cut down traversal in SPTs with low-ranked roots.

Despite early termination, PLaNTed trees can possibly explore a large part of the graph which PLL would have pruned. Fig.\ref{fig:twVsLabel} shows that in PLaNT, \# vertices explored in an SPT per label generated ($\Psi$) can be $>10^4$. Large value of $\Psi$ implies a lot of exploration overhead that PLL algorithm would have avoided by pruning.

\begin{figure}[htbp]
    \centering
 \subfloat{\includegraphics[width=0.5\linewidth]{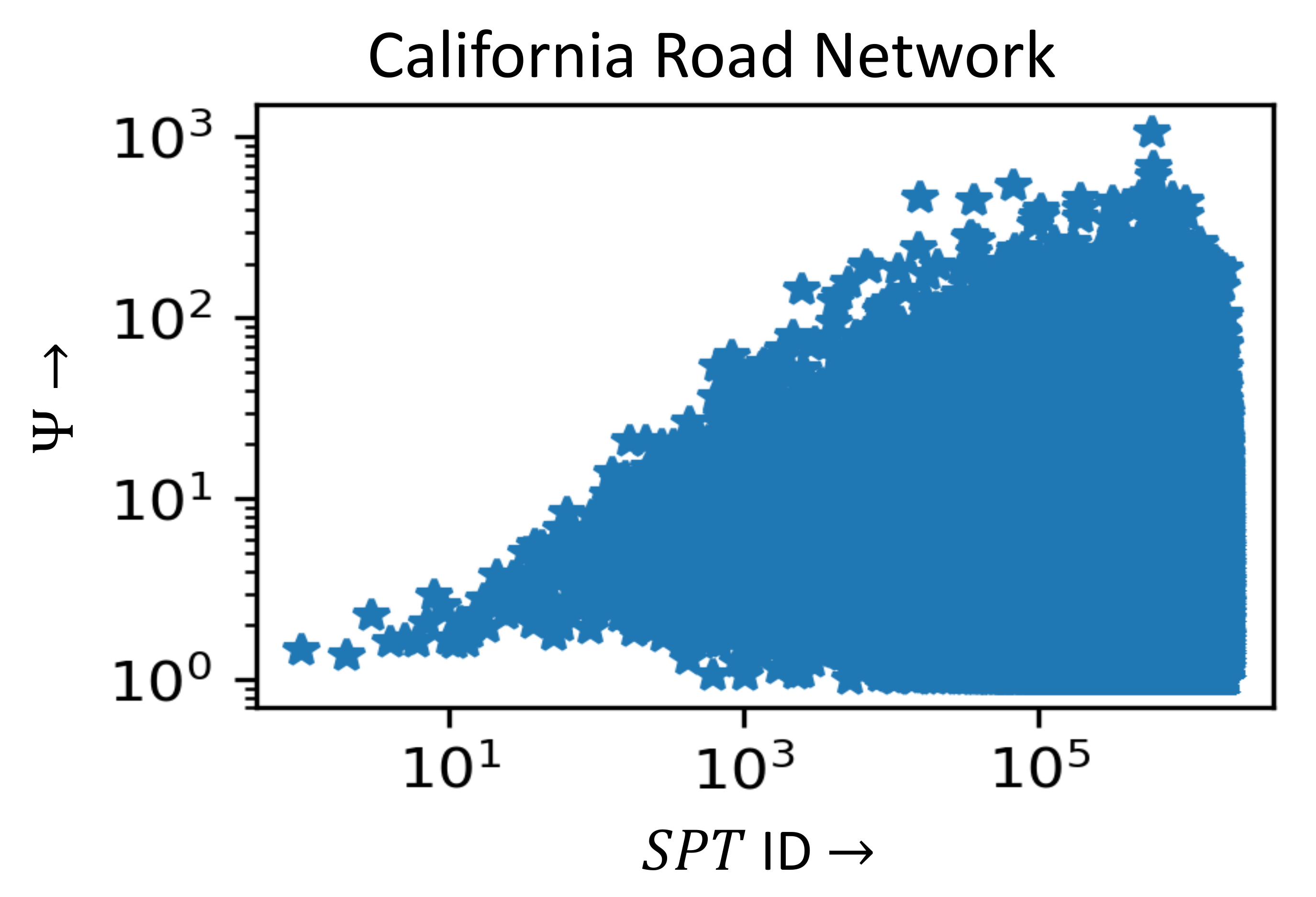}}~~ \subfloat{\includegraphics[width=0.5\linewidth]{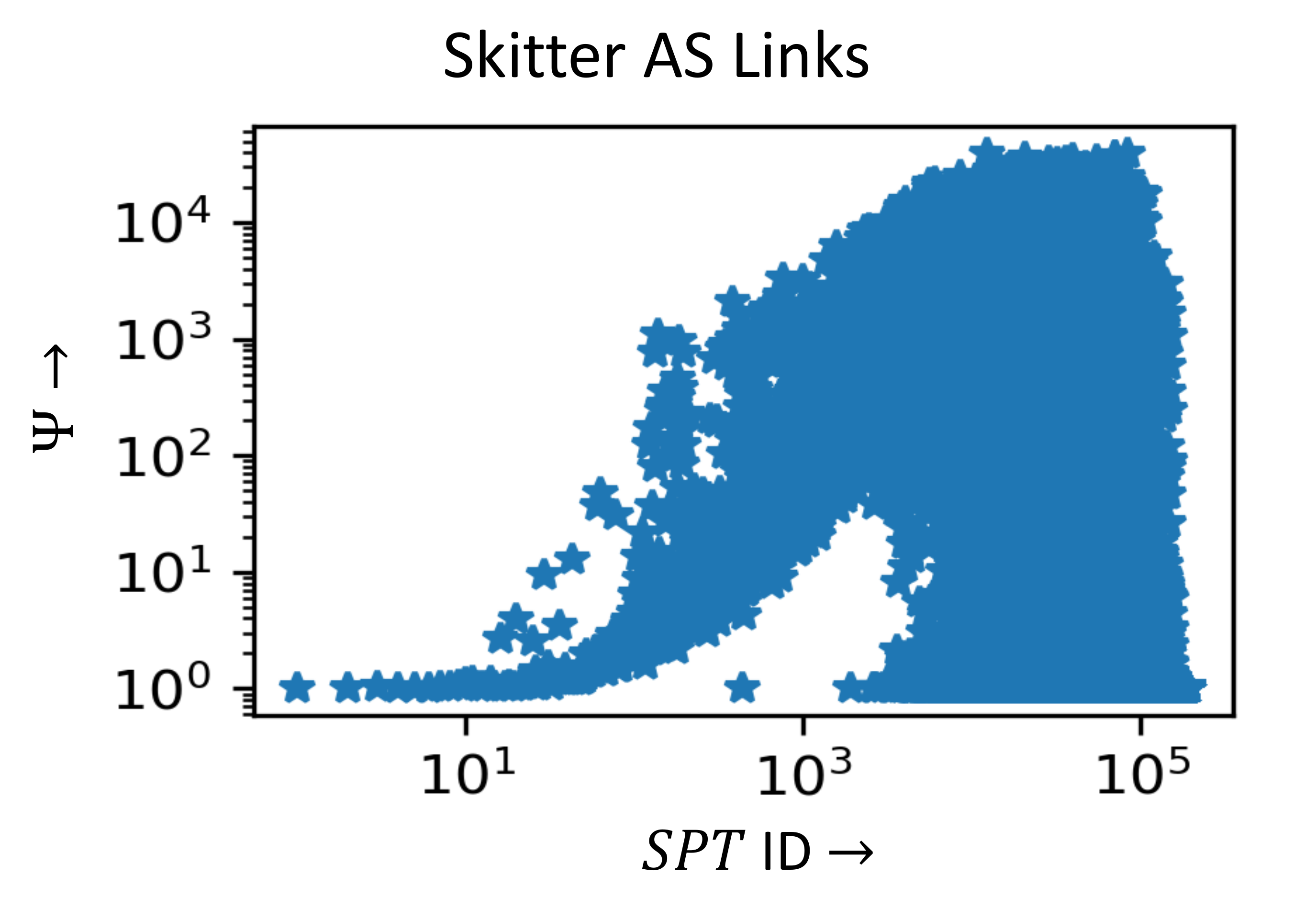}}
    \caption{$\Psi$ (ratio of \# vertices explored in an SPT to the \# labels generated) can be very high for later SPTs.\vspace{-2mm}}
    \label{fig:twVsLabel}
\end{figure}

\begin{algorithm}[]
	\caption{Planting Shortest Path Trees \texttt{(PLaNTDijkstra)}}
	\label{alg:plant}
	\begin{algorithmic}[1]
	    \Statex{\textbf{Input:} $G(V, E, W)$, $R$, root $h$}    
	    \Statex{$\delta_v\rightarrow$ distance to $v$, $a[]\rightarrow$ ancestor array, $Q\rightarrow$ priority queue, $cnt\rightarrow$ number of vertices $v$ with $a[v]=h$}
        \State{$\delta_h=0, a[h]=h$ and $a[v]=v, \delta_v=\infty\ \forall\ v\in V\setminus{h}$}
        \State{add $h$ to $Q$;\ \ $cnt=1$}
        \While{$Q$ is not empty}
            \If{$cnt=0$} exit \Comment{Early Termination}
            \EndIf
            \State{pop $(v,\delta_v)$ from $Q$; compute $nA=\argmax\limits_{x\in\{v, a[v]\}}R(x)$}
            \If{$a[v]=h$}  $cnt=cnt-1$  \EndIf
            \If {$R[nA]>R[h]$} continue
            \EndIf 
            \State{$L_v = L_v \cup \{(h, \delta_v)\}$}
            \ForEach{$u\in N_v$}
                \State{$pA = a[u]$}
                \If{$\delta_v+w_{v,u}<\delta_u$} $a[u]=\argmax\limits_{x\in\{nA, u\}} R(x)$
                \ElsIf{$\delta_v+w_{v,u}=\delta_u$} $a[u]=\argmax\limits_{x\in\{nA, pA\}} R(x)$
                \EndIf
                \If{$a[u]=h$ \text{and} $pA\neq h$} $cnt=cnt+1$
                \ElsIf{$a[u]\neq h$ and $pA=h$} $cnt=cnt-1$
                \EndIf
                \State{$\delta_u=\min(\delta_u, \delta_v+w_{v,u})$;\ \ update $Q$}
                
            \EndForEach
            
        \EndWhile
	\end{algorithmic}
\end{algorithm}


\subsubsection{Hybrid PLaNT + DGLL}
Apart from its embarrassingly parallel nature, an important virtue of PLaNT is its \textit{compatibility} with DGLL. Since PLaNT also constructs SPTs in rank order and generates labels with root as the hub, we can seamlessly transition between PLaNT and DGLL to enjoy the best of both worlds. We propose a \textit{Hybrid} algorithm that initially uses PLaNT and switches to DGLL after certain number of SPTs. The initial SPTs rooted at high ranked vertices generate most of the labels in CHL (fig.\ref{fig:labelVsRank}) and exhibit low $\Psi$ value (fig.\ref{fig:twVsLabel}). By PLaNTing these SPTs, we efficiently parallelize bulk of the computation and avoid communicating a large fraction of the overall label set,
at the cost of little additional exploration in the trees. By doing PLaNT for the initial SPTs, we also avoid a large number of distance queries that PLL or DGLL would have 
done on all the visited vertices in those SPTs. In the later half of execution, $\Psi$ becomes high and very few labels are generated per SPT. The Hybrid algorithm uses DGLL in this half to exploit the heavy pruning and avoid the inefficiencies associated with \looseness=-1PLaNT.

The Hybrid algorithm is a natural fit for scale-free networks. These graphs may have a large 
tree-width $w$ but they exhibit a \textit{core-fringe} structure with a small dense core whose 
removal leaves a fringe like structure with very low tree-width \cite{weiCoreFringe, akibaCoreFringe}. Typical degree and centrality based ordering schemes also tend to rank the 
vertices in the dense core highly. In such graphs, the Hybrid algorithm uses PLaNT to construct 
SPTs rooted at core vertices which generate a large number of labels. SPTs rooted on fringe 
vertices generate few labels and are constructed using DGLL which exploits heavy pruning to limit 
computation. 

For graphs with a  \textit{core-fringe} structure, a relaxed tree decomposition $(\chi, T(V_T, E_T))$ parameterized by an integer $c$ can be computed such that $|X_{t_r}|=w_m \wedge |X_t|\leq c\ \forall\ t\in V_T\setminus t_{r}$, where $t_r$ is the root of $T$ and $\chi=\{X_t \subseteq V\ \forall\ t\in V_T\}$ maps vertices in $T$ (bags) to subset of vertices in $G$\cite{akibaCoreFringe}. In other words, except root bag, $|X_t|$ is bounded by a parameter $\{c | c \ll w \leq w_m\}$.

\begin{lemma}
    The hybrid algorithm performs $\mathcal{O}(m\log{n}\cdot(w_m + c\log^2{n}) + nc\log{n}\cdot(w_m + c\log{n}))$ work, broadcasts only $\mathcal{O}(cn\log{n})$ data, generates $\mathcal{O}(n\cdot(w_m + c\log{n}))$ hub labels and answers each query in $\mathcal{O}(w_m + c\log{n})$ time.
\end{lemma}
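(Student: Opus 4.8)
The plan is to reuse the relaxed tree decomposition $(\chi, T(V_T, E_T))$ as the analysis scaffold, exactly as in the proof of Lemma~\ref{lemma:complexityLCC}, ranking each vertex by the depth of its shallowest bag (root-bag vertices highest, then their children, and so on). Since the hybrid outputs the CHL, the label-count and query bounds are structural facts about the CHL under this ranking rather than facts about the schedule: the only hubs a vertex $v$ acquires are the vertices lying in bags on the path from $b_v$ to the root of $T$. The root bag holds $w_m$ vertices while every other bag on this path holds at most $c$ vertices over a path of depth $\mathcal{O}(\log n)$, so $|L_v| = \mathcal{O}(w_m + c\log n)$. Summing over $v$ gives $\mathcal{O}(n(w_m + c\log n))$ hub labels, and a query that merges the two sorted label sets runs in time linear in their size, i.e. $\mathcal{O}(w_m + c\log n)$.

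For the broadcast volume I would separate the two phases. PLaNT is communication-free (it was already shown to be embarrassingly parallel, with each SPT carrying no dependence on the output of other SPTs), so the core phase contributes nothing to the broadcast. Only the DGLL fringe phase exchanges labels for cleaning, and its roots all lie in non-root bags; hence it can only emit the fringe-ancestor labels of each vertex, of which there are $\mathcal{O}(c\log n)$. This yields a total broadcast of $\mathcal{O}(cn\log n)$.

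For the work I would split the cost as (PLaNT core) $+$ (DGLL fringe) and show each lands inside the stated expression. The core phase runs one Dijkstra per root-bag vertex with no distance queries, so each costs $\mathcal{O}(m + n\log n)$ (early termination only helps); over $w_m$ roots this is $\mathcal{O}(w_m(m+n\log n))$, absorbed by the $w_m m\log n$ and $w_m nc\log n$ summands. The fringe phase I would analyse by mirroring Lemma~\ref{lemma:complexityLCC} with the per-level bag bound $c$ (away from the root) in place of $w$: the priority-queue/edge-relaxation work, charged through each labelled vertex to its incident edges across the $\mathcal{O}(\log n)$-deep decomposition and paying $\mathcal{O}(\log n)$ per heap operation, supplies the remaining $cm\log^3 n$ portion, so that together with the core Dijkstras it fills out the $m\log n\cdot(w_m + c\log^2 n)$ bracket; the distance queries---$\mathcal{O}(c\log n)$ probed vertices per fringe SPT, each query scanning a full label set of size $\mathcal{O}(w_m + c\log n)$, over $\mathcal{O}(n)$ roots---supply the $nc\log n\cdot(w_m + c\log n)$ bracket.

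The main obstacle is the exploration accounting in the two regimes, because the two algorithms bound traversal by completely different mechanisms. For PLaNT I must argue that not pruning trees is affordable for the core: each of the $w_m$ root-bag SPTs is charged a full Dijkstra, and I have to verify that restricting PLaNT to these $w_m$ roots (rather than all $n$) keeps the cost within budget instead of paying the $\mathcal{O}(mn + n^2\log n)$ worst case of a standalone PLaNT run. For DGLL I must tie the graph exploration of each fringe SPT to the subtree of its bag in $T$---using the separator property so that a rank-pruned SPT rooted in a depth-$d$ bag only touches descendants---and then carefully compound the logarithmic factors (decomposition depth, heap cost, and per-query label scan) to reproduce exactly the poly-log powers in the statement, while checking that the distributed, partially-visible pruning of DGLL does not inflate the expanded-vertex count beyond this structural bound.
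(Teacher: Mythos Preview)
Your approach is essentially the paper's: split the work into the $w_m$ PLaNT'ed root-bag SPTs (each charged one full Dijkstra) plus a DGLL fringe analysed via Lemma~\ref{lemma:complexityLCC} with bag-width $c$ and per-vertex label size $w_m + c\log n$, arriving at the same label, query, broadcast, and work bounds. One detail worth making explicit: the $\mathcal{O}(\log n)$ path depth you invoke is not a property of the relaxed decomposition $T$ itself---the paper first applies a centroid decomposition to every subtree hanging off the root bag $t_r$ to obtain a balanced tree $T'$, and it is in $T'$ that both the ranking-by-depth and the logarithmic height bound live.
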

\begin{proof}
    Consider the relaxed tree decomposition \\$(\chi, T(V_T, E_T))$ with root $t_r$ and perform centroid decomposition on all subtrees rooted at the children of $t_{r}$ to obtain tree $T'$. 
    The height of any tree in the forest generated by removing $t_r$ from $T'$ is $\mathcal{O}(\log{n})$. Hence, the height of $T' = \mathcal{O}(\log{n}+1) = \mathcal{O}(\log{n})$.
    
    Consider a ranking $R$ where $R(v)$ is determined by the minimum depth bag $\{b\in V_{T'} | v\in X_b\}$. For GLL, the number of labels generated by SPTs from vertices in root bag is $O(w_mn)$. Combining this with lemma \ref{lemma:complexityLCC}, we can say that total labels generated by GLL is $\mathcal{O}(n\cdot(w_m + c\log{n}))$ and query complexity is $\mathcal{O}(w_m + c\log{n})$. The same also holds for the Hybrid algorithm since it outputs the same CHL as \looseness=-1GLL.
    
    If Hybrid algorithm constructs $w_m$ SPTs using PLaNT and rest using DGLL, the overall work-complexity is $\mathcal{O}(w_m\cdot (m + n\log{n})) + \mathcal{O}(mc\log^2{n} + nc\log{n}\cdot(w_m + c\log{n})) = \mathcal{O}((m\log{n}\cdot(w_m + c\log^2{n}) + nc\log{n}\cdot(w_m + c\log{n})))$. 
    
    The Hybrid algorithm only communicates $\mathcal{O}(cn\log{n})$ labels generated after switching to DGLL, resulting in \\$\mathcal{O}(cn\log{n})$ data broadcast. In comparison, doing only DGLL for the same ordering will broadcast $\mathcal{O}(w_mn+cn\log{n})$ data.
\end{proof}

\noindent{In reality, we use the ratio $\Psi$ as a heuristic, dynamically switching from PLaNT to DGLL when $\Psi$ becomes greater than a 
\looseness=-1threshold $\Psi_{th}$.}

\begin{lemma}
    The Hybrid algorithm consumes\\ $\mathcal{O}\left(\frac{n\cdot(w_m + c\log{n})}{q} + n + m\right)$ main memory per node, where $q$ is the number of nodes used.
\end{lemma}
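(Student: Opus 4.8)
The plan is to decompose the per-node memory footprint into a \emph{replicated} part that every node must hold regardless of $q$, and a \emph{partitioned} part consisting of the hub labels, and then to invoke the total-label bound from the preceding lemma together with the Label Set Partitioning optimization of DGLL. Concretely, the target expression $\mathcal{O}\!\left(\frac{n\cdot(w_m+c\log n)}{q} + n + m\right)$ is exactly ``partitioned labels $+$ replicated floor'', so the proof reduces to pinning down each summand.

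First I would enumerate the replicated data. Every node loads the full graph instance $G(V,E,W)$ and a complete copy of the ranking $R$ for rank queries, contributing $\mathcal{O}(n+m)$ and $\mathcal{O}(n)$ respectively. The transient structures used while PLaNTing trees or running pruned Dijkstra --- the distance array $\delta$, the ancestor array $a[\cdot]$, and the priority queue $Q$ --- are each $\mathcal{O}(n)$ and are reused across successive SPTs, so together they add only $\mathcal{O}(n)$. These components give a $q$-independent floor of $\mathcal{O}(n+m)$. For the labels, the previous lemma states that the Hybrid algorithm generates $\mathcal{O}(n\cdot(w_m + c\log n))$ labels in total, and by Label Set Partitioning node $i$ stores only those labels $(h,d(v,h))$ with $h\in TQ_i$, so the sets $L_v=\cup_i L_{i,v}$ are disjoint across nodes and their sizes sum to the total. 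It therefore suffices to show the partition is balanced, i.e.\ each node carries an $\mathcal{O}(1/q)$ fraction, $\mathcal{O}\!\left(\frac{n\cdot(w_m+c\log n)}{q}\right)$ labels; adding this to the $\mathcal{O}(n+m)$ floor yields the claim.

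The main obstacle is justifying the balance of the label partition, since (as in fig.~\ref{fig:labelVsRank}) label volume is heavily skewed toward high-ranked roots rather than uniform over $V$. The key is the rank-circular assignment $TQ_i=\{v \mid R(v)\bmod q = i\}$, which scatters the few extremely label-heavy top-ranked roots onto distinct nodes instead of concentrating them. I would make this precise by splitting the roots into the $\mathcal{O}(w_m)$ top-ranked ``core'' roots --- whose $\mathcal{O}(w_m n)$ labels the modular assignment spreads as evenly as possible across nodes (at most $\lceil w_m/q\rceil$ such roots per node) --- and the fringe roots, whose $\mathcal{O}(cn\log n)$ labels are distributed near-uniformly as observed empirically; both halves thus divide by $q$. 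Finally I would absorb the lower-order replicated terms into the additive $n+m$, noting it dominates them, to arrive at the stated per-node memory of $\mathcal{O}\!\left(\frac{n\cdot(w_m+c\log n)}{q} + n + m\right)$.
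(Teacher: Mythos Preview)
Your proposal is correct and follows the same decomposition as the paper: partitioned label storage contributing $\mathcal{O}\!\left(\frac{n(w_m+c\log n)}{q}\right)$ per node, plus the replicated graph at $\mathcal{O}(n+m)$. The paper's own proof is a two-liner that simply asserts the per-node label bound without any justification of partition balance, so your treatment of the rank-circular assignment and the core/fringe split is strictly more detailed than what appears in the paper.
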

\begin{proof} 
    Distributed labels use $\mathcal{O}\left(\frac{n\cdot(w_m + c\log{n})}{q}\right)$ space per node and storing the graph requires $O(n+m)$ \looseness=-1space. 
\end{proof}

\subsection{Enabling efficient Multi-node pruning}\label{sec:commTable}
We propose an optimization that \textit{simultaneously} solves the following two problems:\\
\textit{\underline{1. Pruning traversal in PLaNT}} $\rightarrow$ 
The reason why PLaNT cannot prune using rank or distance queries is that
with pruning using partial label info, an SPT can still visit those vertices which would've been pruned if \textit{all} prior labels were available and possibly, through non shortest paths with the wrong ancestor information. This can lead to redundant label generation and defeat the purpose of PLaNT. 

In general, if a node prunes using $H_u$, it must have $\{H_v \forall v\in V | R(v)\geq R(u)\}$ to guarantee non-redundant labels. When this condition is met, a vertex is either visited through the shortest path with correct ancestor or is pruned. We skip the proof details for \looseness=-1brevity.\\
\textit{\underline{2. Redundant labels in DGLL}} $\rightarrow$ Fig.\ref{fig:pruneVsLabel} shows the label count generated by PLL if pruning queries are restricted to use hub labels from few top-ranked hubs only. We observe that label count decreases dramatically even if pruning utilizes only few highest-ranked \looseness=-1hubs. 

\begin{figure}[]
    \centering
 \subfloat{\includegraphics[width=0.5\linewidth]{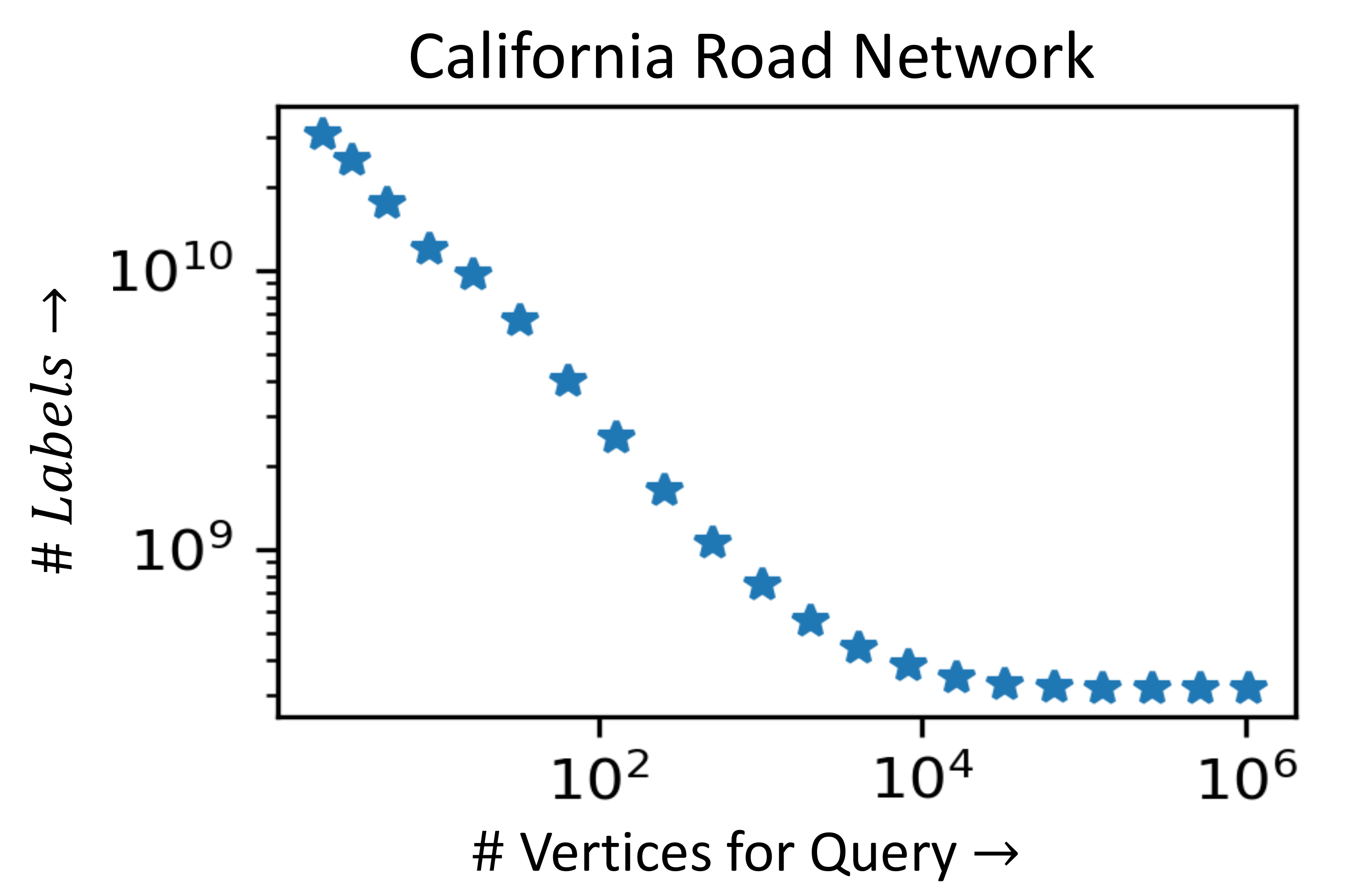}}~~ \subfloat{\includegraphics[width=0.5\linewidth]{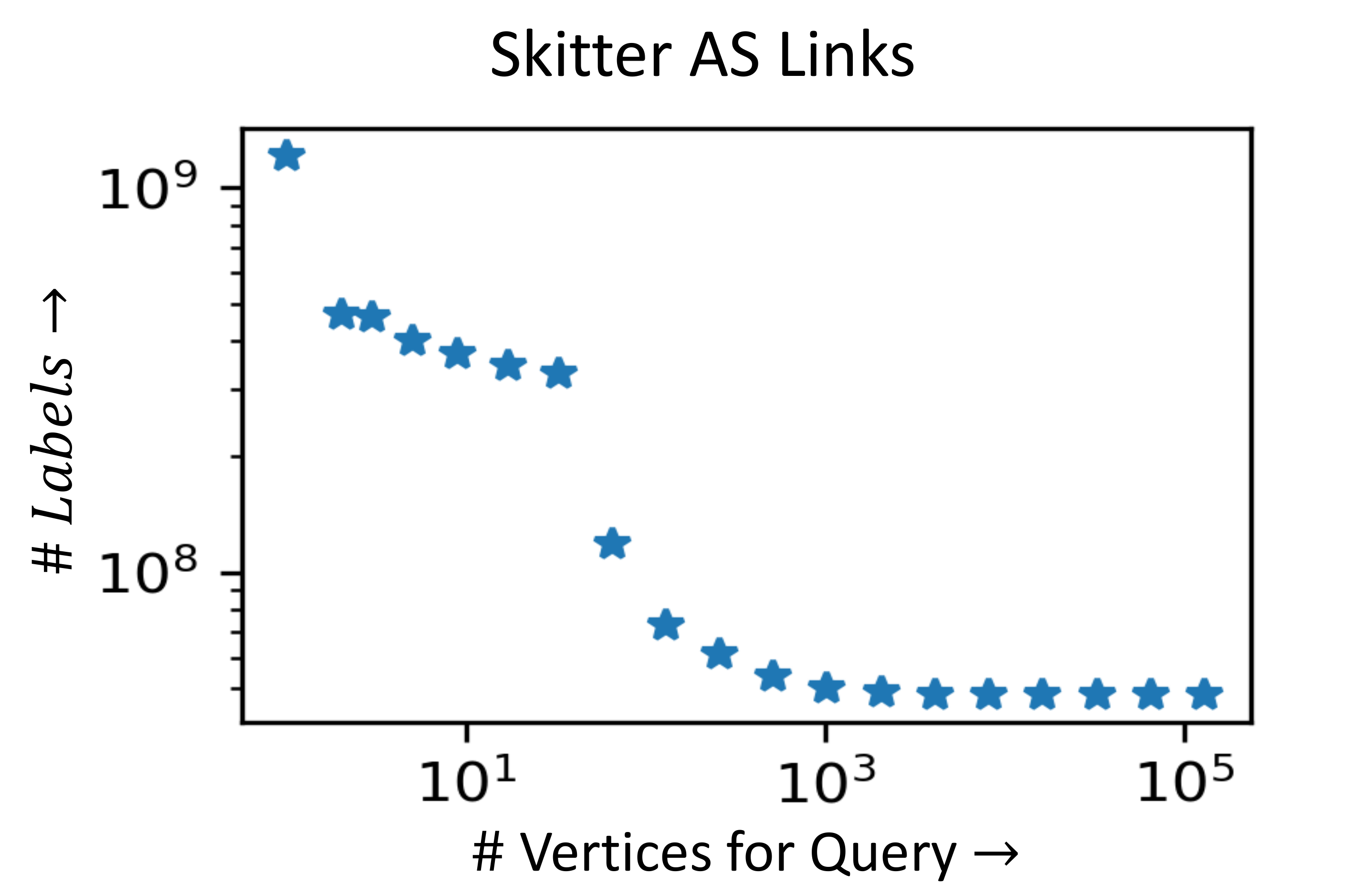}}
    \caption{\# Labels generated if pruning queries in PLL use few (x-axis) highest rank hubs. X-axis$=0$ means rank queries only. When pruning is completely absent, \# labels = $|V|^2$\vspace{-2.5mm}}
    \label{fig:pruneVsLabel}
\end{figure}

Thus, for a given integer $\eta$, if we store \textit{all} labels from $\eta$ most important hubs on every compute node i.e. $HC=\cup_{v\in V | R(v)\geq n-\eta}\{H_v\}$, we \looseness=-1can
\begin{itemize}[leftmargin=*]
    \item use distance queries on $HC$ to prune PLaNTed trees, and
    \item drastically increase pruning efficiency of DGLL.
\end{itemize}

To this purpose, we allocate a \textit{Common Label table} on every node that stores common labels $HC$. These labels are broadcasted even if they are generated by PLaNT. 
For $\eta=\mathcal{O}(1)$, using common labels incurs additional $\mathcal{O}(n)$ broadcast traffic, $\mathcal{O}(w_mn)$ queries of $\mathcal{O}(1)$ complexity each
, and consumes $O(n)$ more memory per node. Thus, it does not alter the theoretical bounds on work-complexity, communication volume and space requirements of the Hybrid \looseness=-1algorithm. In our implementation, we store labels from $\eta =16$ highest ranked hubs in the Common Label Table. 





\subsection{Extensions}
The ideas discussed in section \ref{sec:smp} and \ref{sec:dmp} are not restricted to multi-node clusters and can be used for any massively parallel system, such as GPU. On GPUs, simply parallelizing PLL is not enough, because the first tree constructed by every concurrent thread will not have any label information from higher ranked SPTs and will not prune at all on distance queries. For a GPU which can run thousands of concurrent threads, this can lead to an unacceptable blowup in label size making Label Cleaning extremely time consuming. Even worse, the system may simply run out of memory. Instead, we can use PLaNT to construct first few SPTs for every thread and switch to GLL afterwards. Our approach can also be extended to disk-based processing where access cost to labels is very high. The Common Label Table can be mapped to faster memory in the hierarchy (DRAM) to accelerate pre-processing. 
Finally, we note that by storing the parent of each vertex in an SPT along with the corresponding hub label, CHL can also be used to compute \textit{shortest paths} in time linear to the number of edges in the paths. 

\section{Querying}\label{sec:query}
We provide three modes to the user for distance queries:
\begin{itemize}[leftmargin=*]
    \item \textit{Querying with Labels on Single Node (QLSN)} $\rightarrow$ In this mode, all labels are stored on every node and a query response is computed only by the node where the query emerges. Existing hub labeling frameworks \cite{parapll, akibaPLL, dongPLL, parallelPLLThesis} only support this mode.
    \item \textit{Querying with Fully Distributed Labels (QFDL)} $\rightarrow$ In this mode, the label set of every vertex is partitioned between all nodes
    . A query is broadcasted to all nodes and individual responses of the nodes are reduced using MPI\_MIN to obtain the shortest path distance. It utilizes parallel processing power of multiple nodes and consumes only $\mathcal{O}\left(\frac{|V|\cdot(w_m + c\log{|V|})}{q}\right)$ memory per node, but incurs high communication costs.
    \item \textit{Querying with Distributed Overlapping Labels (QDOL)} $\rightarrow$ In this mode, 
    we divide the vertex set $V$ into $\zeta$ partitions. For every possible partition pair, a node is allocated that stores complete label set of all vertices in that pair. Thus, a given query is answered completely by a single node but not by every node. 
    Unlike QFDL, this mode utilizes the more efficient P2P communication instead of broadcasting. Each query $(u,v)$ is mapped to the node that has labels for vertex partitions containing $u$ and $v$. The query is then communicated to this node which single-handedly computes and sends back the response. In this case, multi-node parallelism is exploited in a batch of queries where different nodes simultaneously compute responses to the respective queries mapped to them.
    
    For a cluster of $q$ nodes, $\zeta$ can be computed as follows: 
    \begin{equation*}
        {\zeta\choose 2} = q \implies \zeta = \frac{1 + \sqrt{1+8q}}{2}
    \end{equation*}
    Storing labels of two vertex partitions consumes\\ $\frac{2|V|\cdot(w_m + c\log{|V|})}{\zeta} = \mathcal{O}\left(\frac{|V|\cdot(w_m + c\log{|V|})}{\sqrt{q}}\right)$ memory per node (much larger than \looseness=-1QFDL). 
\end{itemize}


\section{Experiments}
\subsection{Setup}
We conduct shared-memory experiments on a 36 core, 2-way hyperthreaded, dual-socket linux server with two Intel Xeon E5-2695 v4 processors@ 2.1GHz and 1TB DRAM. For the distributed memory experiments, we use a 64-node cluster with each node having an 8 core, 2-way hyperthreaded, Intel Xeon E5-2665@ 2.4GHz processor and 64GB \looseness=-1DRAM. We use OpenMP v4.5 and OpenMPI v3.1.2 for \looseness=-1parallelization. Our shared-memory implementations use all $36$ 
 cores with hyperthreading and distributed implementations use all $8$ cores with hyperthreading on each node. 

\textbf{\textit{Baselines:}} We use sequential PLL (seqPLL), state-of-the-art paraPLL shared-memory (SparaPLL) and distributed (DparaPLL) versions for comparison of pre-processing efficiency. We enable SparaPLL with dynamic task assignment policy for good load balancing. Our implementation of DparaPLL\footnote{paraPLL code is not publicly available.} executes SparaPLL on every compute node
using a task queue with circular allocation (section \ref{sec:dgll}).
We observed that DparaPLL scales poorly as the number of compute nodes increase. This is 
because of high communication overhead
and label size explosion that can be attributed to the absence of rank queries. 
Therefore, we also plot the performance of DGLL for better baselining.
Both DGLL and DparaPLL implementation synchronize $\log_8{n}$ times to exchange labels among the nodes.

\begin{figure}[htb]
    \centering
\includegraphics[width=0.9\linewidth]{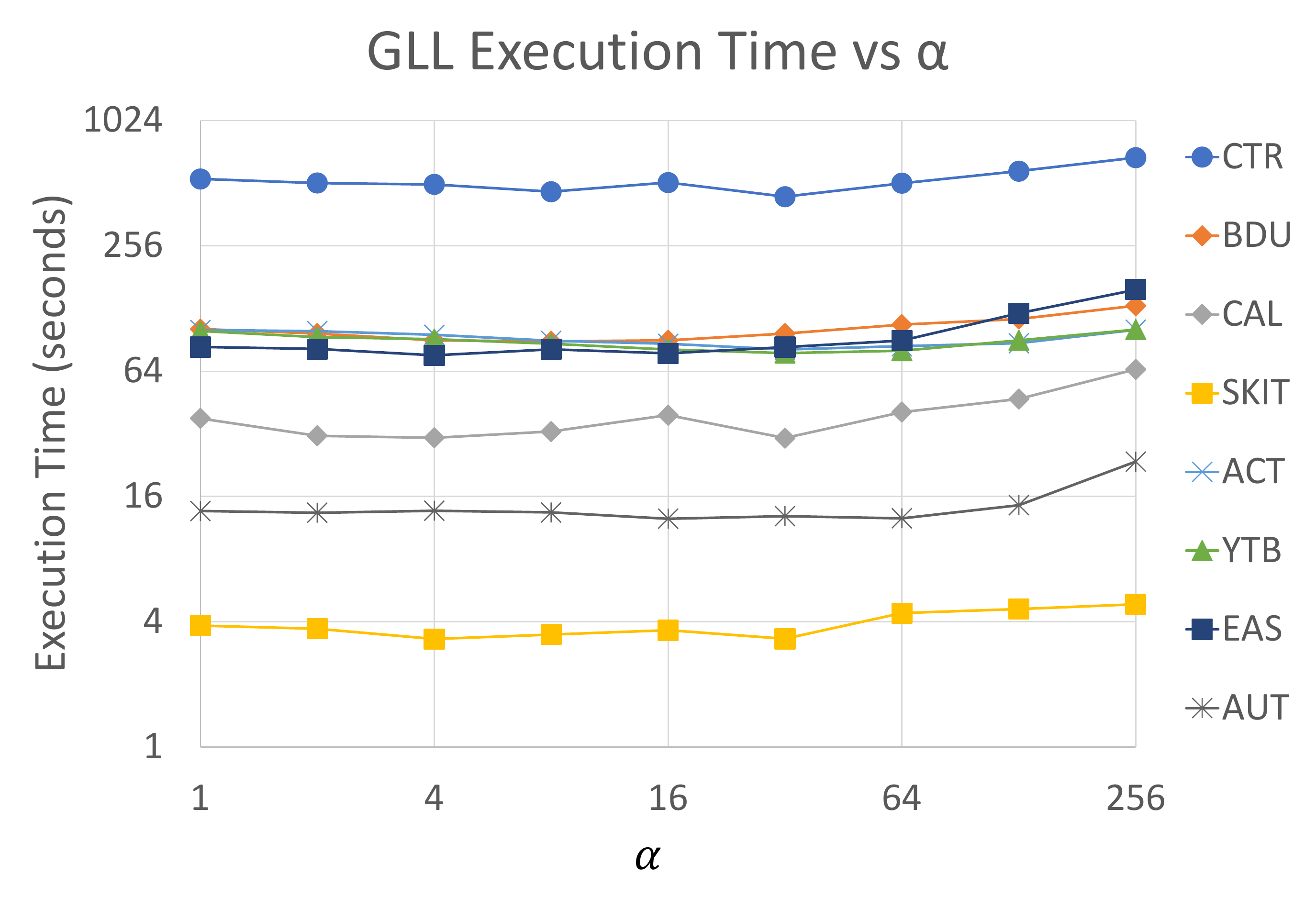}
\caption{\vspace{-1mm}GLL execution time vs synchronization threhsold $\alpha$\vspace{-1mm}}
    \label{fig:alpha}
\end{figure}

\textbf{\textit{Implementation Details:}}
We vary the synchronization threshold $\alpha$ in GLL and switching threshold $\Psi_{th}$ in the 
Hybrid algorithm to empirically assess the impact of these parameters on algorithm performance.
Figure \ref{fig:alpha} shows the impact of $\alpha$ on GLL. We note that the execution time is robust to significant variations in $\alpha$ within a range of $2$ to $32$. Intuitively, a small value of 
$\alpha$ reduces cleaning time (section \ref{sec:gll}) but making it too small can lead to frequent
synchronizations that hurt parallel performance.
Based on
our observations, we set $\alpha=4$ for further experiments. 
Figure \ref{fig:psi} shows the effect of $\Psi_{th}$ on the performance of hybrid algorithm. Intuitively, keeping $\Psi_{th}$ too large increases the computation overhead (seen in scale-free networks) because even low-ranked SPTs that generate few labels, are PLaNTed. On the other hand, keeping 
$\Psi_{th}$ too small results in poor scalability (seen in road networks) as the algorithm switches to DGLL quickly and parallelism and communication avoidance of PLaNT remain underutilized. Based on these findings, we set $\Psi_{th}=100$ for scale-free networks and $\Psi_{th}=500$ for road networks. 

\begin{figure}[htb]
    \centering
 \subfloat{\includegraphics[width=0.5\linewidth]{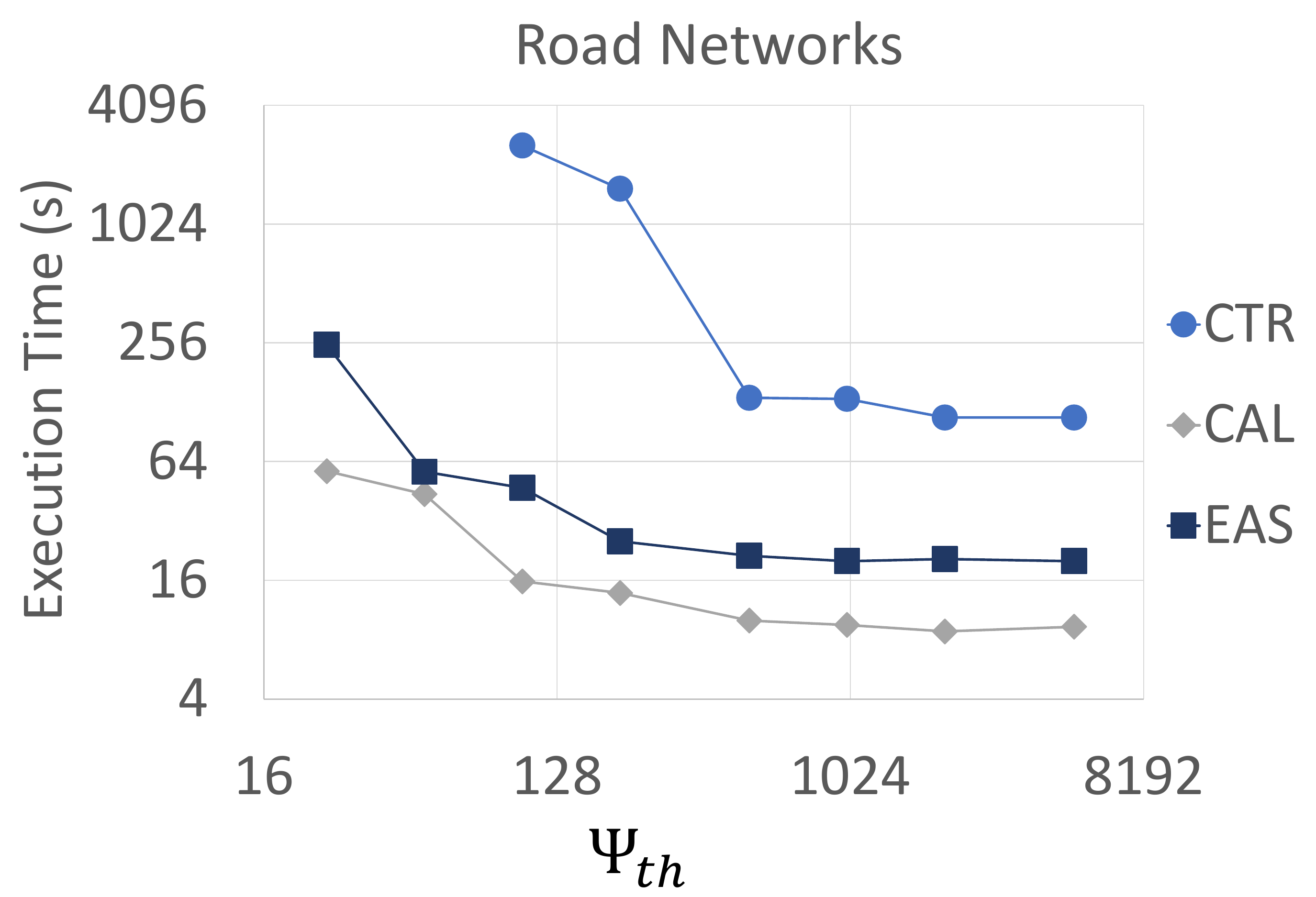}}
 ~
 \subfloat{\includegraphics[width=0.5\linewidth]{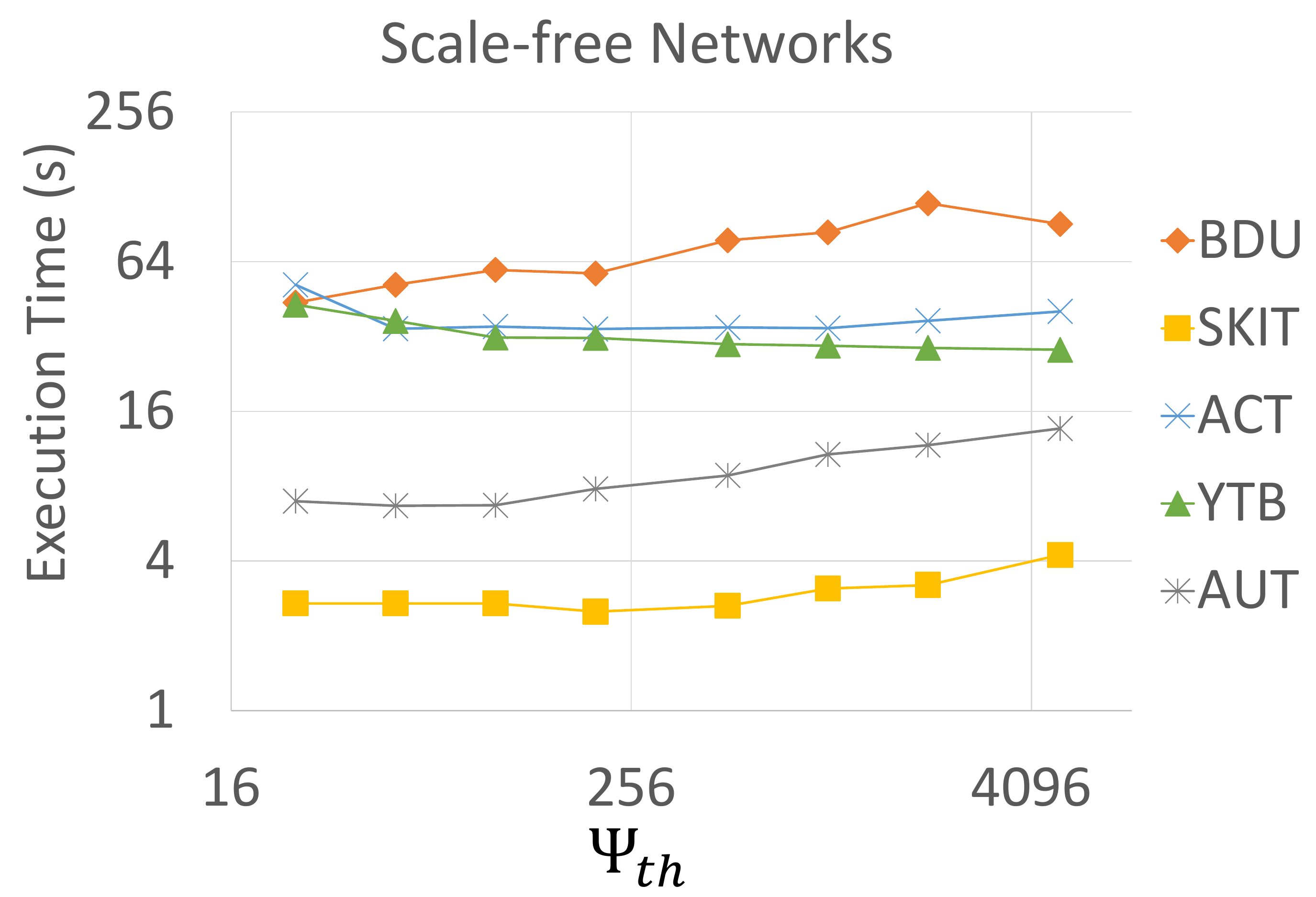}}
 
    \caption{Execution time of Hybrid algorithm on 16 compute nodes vs switching threshold $\Psi_{th}$\vspace{-2mm}}
    \label{fig:psi}
\end{figure}



\subsubsection{Datasets}
We evaluate our algorithms on $12$ real-world graphs with varied topologies, as listed in table \ref{table:datasets}. The scale-free networks do not have edge weights from the download sources. For such graphs, we assign edge weights between $[1, \sqrt{n})$ uniformly at random.

The ranking $R$ is determined by betweenness for road networks\cite{vldbExperimental} and degree 
for scale-free networks\cite{akibaPLL}. Betweenness is approximated by sampling a few shortest path trees and both methods are inexpensive to compute.

\begin{table}[htbp]
\caption{Datasets for Evaluation}
\label{table:datasets}
\resizebox{\linewidth}{!}{%
\begin{tabular}{|c|c|c|c|c|}
\hline
\textbf{Dataset} & \textbf{$n$} & \textbf{$m$} & \textbf{Description} & \textbf{Type} \\ \hline
CAL\cite{roadDimacs}                                        & 1,890,815                                   & 4,657,742                                  & California Road network                        & Undirected                     \\ \hline
EAS\cite{roadDimacs}                                          & 3,598,623                                   & 8,778,114                                  & East USA Road network                          & Undirected                     \\ \hline
CTR\cite{roadDimacs}                                        & 14,081,816                                  & 34,292,496                                 & Center USA Road network                        & Undirected                     \\ \hline
USA\cite{roadDimacs}                                        & 23,947,347                                  & 58,333,344                                 & Full USA Road network                          & Undirected                     \\ \hline
SKIT\cite{skitter}                                       & 192,244                                     & 636,643                                    & Skitter Autonomous Systems               & Undirected                     \\ \hline
WND\cite{und}                                        & 325,729                                     & 1,497,134                                  & Univ.  Notre Dame webpages                 & Directed                       \\ \hline
AUT\cite{coauth}                                       & 227,320                                     & 814,134                                    & Citeseer Collaboration                & Undirected                       \\ \hline
YTB\cite{konect}                                        & 1,134,890                                   & 2,987,624                                  & Youtube Social network              & Undirected                     \\ \hline
ACT\cite{konect}                                        & 382,219                                     & 33,115,812                                 & Actor Collaboration Network                    & Undirected                     \\ \hline
BDU\cite{konect}                                        & 2,141,300                                   & 17,794,839                                 & Baidu HyperLink Network                        & Directed                       \\ \hline
POK\cite{konect}                                        & 1,632,803                                   & 30,622,564                                 &  Social network Pokec                 & Directed                       \\ \hline
LIJ\cite{konect}                                        & 4,847,571                                   & 68,993,773                                 & LiveJournal Social network                     & Directed                       \\ \hline
\end{tabular}}
\end{table}

\subsection{Evaluation of Shared-memory Algorithms}
Table \ref{table:smp} compares the performance of GLL with LCC, SparaPLL and seqPLL. It also shows the Average Label Size (ALS) per vertex in CHL (GLL, LCC and seqPLL) and labeling generated by SparaPLL. The query response time is directly proportional to Average Label Size (ALS) per vertex and hence, ALS is a crucial parameter for any hub labeling algorithm.
In case of LIJ graph, none of the shared-memory algorithms finished execution and its CHL ALS was obtained from the distributed algorithms.


We observe that on average, GLL generates $17\%$ less labels than paraPLL which can be quite significant for an application that generates many PPSD queries. GLL is only $1.15\times$ slower than paraPLL on average even though it re-checks every label generated and the cleaning queries use linear-merge based querying\footnote{For space efficiency, the labels are only stored as $L_v$ (ordered by vertex) and there is no copy of labels stored as $H_v$ (ordered by hubs).} as opposed to the more efficient hash-join label construction queries. 

For some graphs such as CAL, GLL is even $1.3\times$ faster than SparaPLL. This is because of rank queries, faster label construction queries due to smaller sized label sets and lock avoidance in GLL. Although not shown in table \ref{table:smp} for brevity, we observed that ALS and scalability of paraPLL worsen as number of threads increase. Hence, we expect the relative performance of GLL to be even better on systems with more \looseness=-1parallelism.

Fig. \ref{fig:breakup} shows execution time breakup for LCC and GLL. GLL cleaning is significantly faster than LCC because of the reduced cleaning complexity (section \ref{sec:gll}). Overall, GLL is $1.25\times$ faster than LCC on average. However, for some graphs such as CAL, fraction of cleaning time is $>30\%$ even for GLL. This is because in the first superstep of GLL, number of labels generated is more than $\alpha n$ as there are no labels available for distance query pruning and number of simultaneous SPTs under construction is $p>\alpha$ ($p$ is \# threads). This problem can be circumvented by using PLaNT for the first superstep in shared-memory implementation as well. 

\begin{table}[htbp]
\caption{Performance comparison of GLL and LCC with baselines. ALS is the average label size per vertex and time=$\infty$ implies execution did not finish in $2$ hours}
\label{table:smp}
\resizebox{\linewidth}{!}{%
\begin{tabular}{c|c|c|c|c|c|c|}
\cline{2-7}
                                       & \multicolumn{2}{c|}{\textbf{SparaPLL}} & \multirow{2}{*}{\textbf{\begin{tabular}[c]{@{}c@{}}CHL \\ ALS\end{tabular}}} & \multirow{2}{*}{\textbf{\begin{tabular}[c]{@{}c@{}}seqPLL \\ Time(s)\end{tabular}}} & \multirow{2}{*}{\textbf{\begin{tabular}[c]{@{}c@{}}LCC \\ Time(s)\end{tabular}}} & \multirow{2}{*}{\textbf{\begin{tabular}[c]{@{}c@{}}GLL \\ Time(s)\end{tabular}}} \\ \cline{1-3}
\multicolumn{1}{|c|}{\textbf{Dataset}} & \textbf{ALS}     & \textbf{Time(s)}    &                                                                              &                                                                                     &                                                                                   &                                                                                   \\ \hline
\multicolumn{1}{|c|}{CAL}              & $108.3$          & $51.2$              & $83.4$                                                                       & $215$                                                                               & $41.4$                                                                            & $35.4$                                                                            \\ \hline
\multicolumn{1}{|c|}{EAS}              & $138.1$          & $116.3$             & $116.8$                                                                      & $680.6$                                                                             & $108.7$                                                                           & $88$                                                                              \\ \hline
\multicolumn{1}{|c|}{CTR}              & $178.7$          & $424.2$             & $160.9$                                                                      & $5045$                                                                              & $664.1$                                                                           & $567.7$                                                                           \\ \hline
\multicolumn{1}{|c|}{USA}              & $185.6$          & $816.9$             & $166.1$                                                                      & $\infty$                                                                            & $1148.6$                                                                          & $834$                                                                             \\ \hline
\multicolumn{1}{|c|}{SKIT}             & $88.3$           & $2.5$               & $85.1$                                                                       & $95.8$                                                                              & $4.85$                                                                             & $3.9$                                                                             \\ \hline
\multicolumn{1}{|c|}{WND}              & $39.6$           & $2.4$               & $23.5$                                                                       & $21.98$                                                                             & $2.94$                                                                             & $2.1$                                                                             \\ \hline
\multicolumn{1}{|c|}{AUT}              & $240.2$          & $10.4$              & $229.6$                                                                      & $670$                                                                               & $18.4$                                                                            & $14.6$                                                                            \\ \hline
\multicolumn{1}{|c|}{YTB}              & $208.9$          & $69.6$              & $207.5$                                                                      & $2692.6$                                                                            & $126.7$                                                                             & $104.6$                                                                           \\ \hline
\multicolumn{1}{|c|}{ACT}              & $376.1$          & $112.4$             & $366.3$                                                                      & $\infty$                                                                            & $151.3$                                                                           & $141.9$                                                                           \\ \hline
\multicolumn{1}{|c|}{BDU}              & $100.1$          & $103.1$             & $90.7$                                                                       & $4736$                                                                              & $133.9$                                                                           & $99.9$                                                                            \\ \hline
\multicolumn{1}{|c|}{POK}              & $2243.4$          & $4159.3$             & $2230.7$                                                                       & $\infty$                                                                              & $\infty$                                                                           & $3916.5$                                                                            \\ \hline
\multicolumn{1}{|c|}{LIJ}              & $-$          & $\infty$             & $1222.5$                                                                       & $\infty$                                                                              & $\infty$                                                                           & $\infty$                                                                            \\ \hline
\end{tabular}
}
\end{table}

\begin{figure}[]
    \centering
\includegraphics[width=\linewidth]{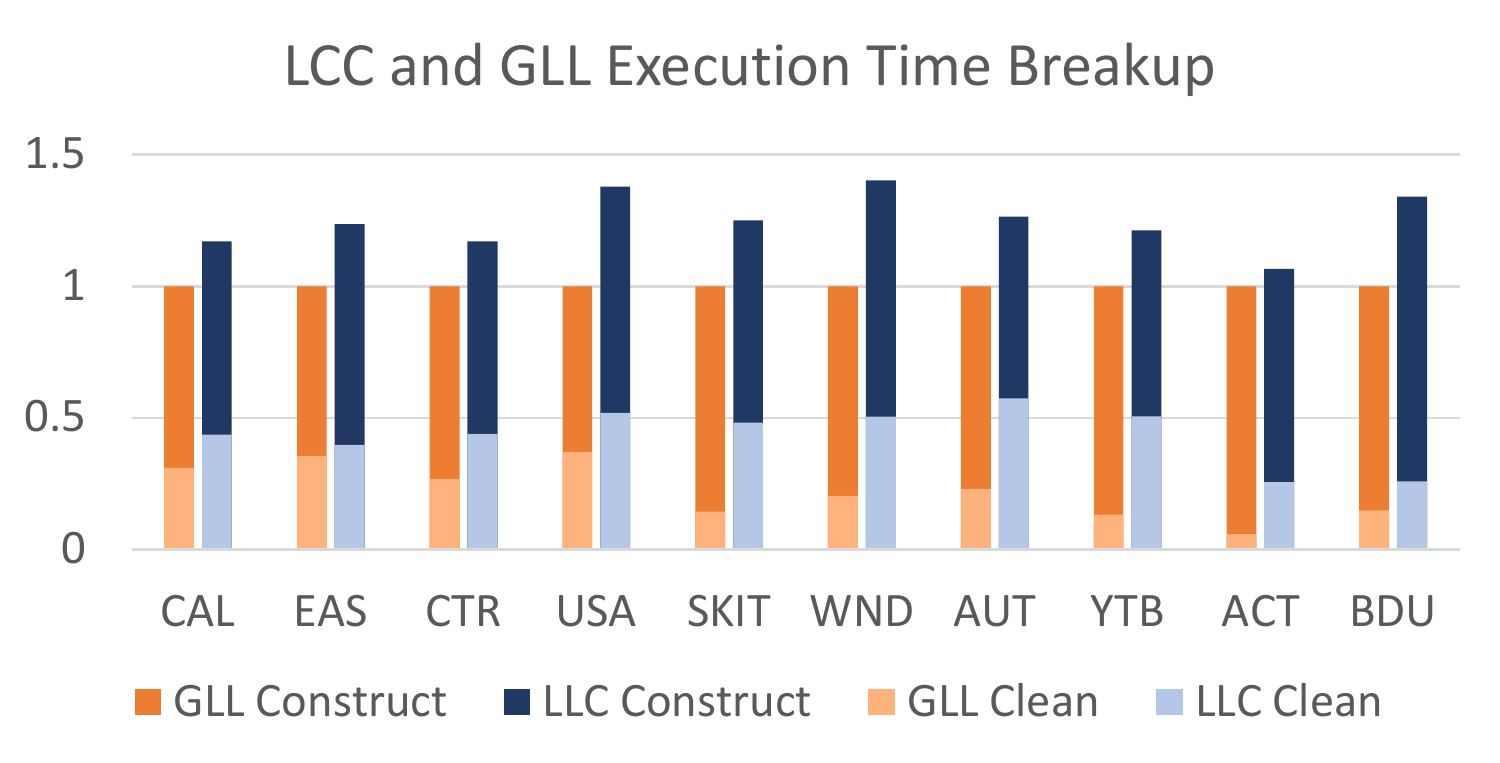}     \caption{Time taken for label construction and cleaning in LCC and GLL normalized by the total execution time of GLL\vspace{-3mm}}
    \label{fig:breakup}
\end{figure}

 \subsection{Evaluation of Distributed-memory Algorithms}
 To assess the scalability of distributed hub labeling algorithms, we vary $q$ from 1 to 64 (\# compute cores from 8 to 512). Fig. \ref{fig:scale} shows the strong scaling of different algorithms in terms of labeling construction time.
 
 \begin{figure*}[htb]
    \centering
\includegraphics[width=\linewidth]{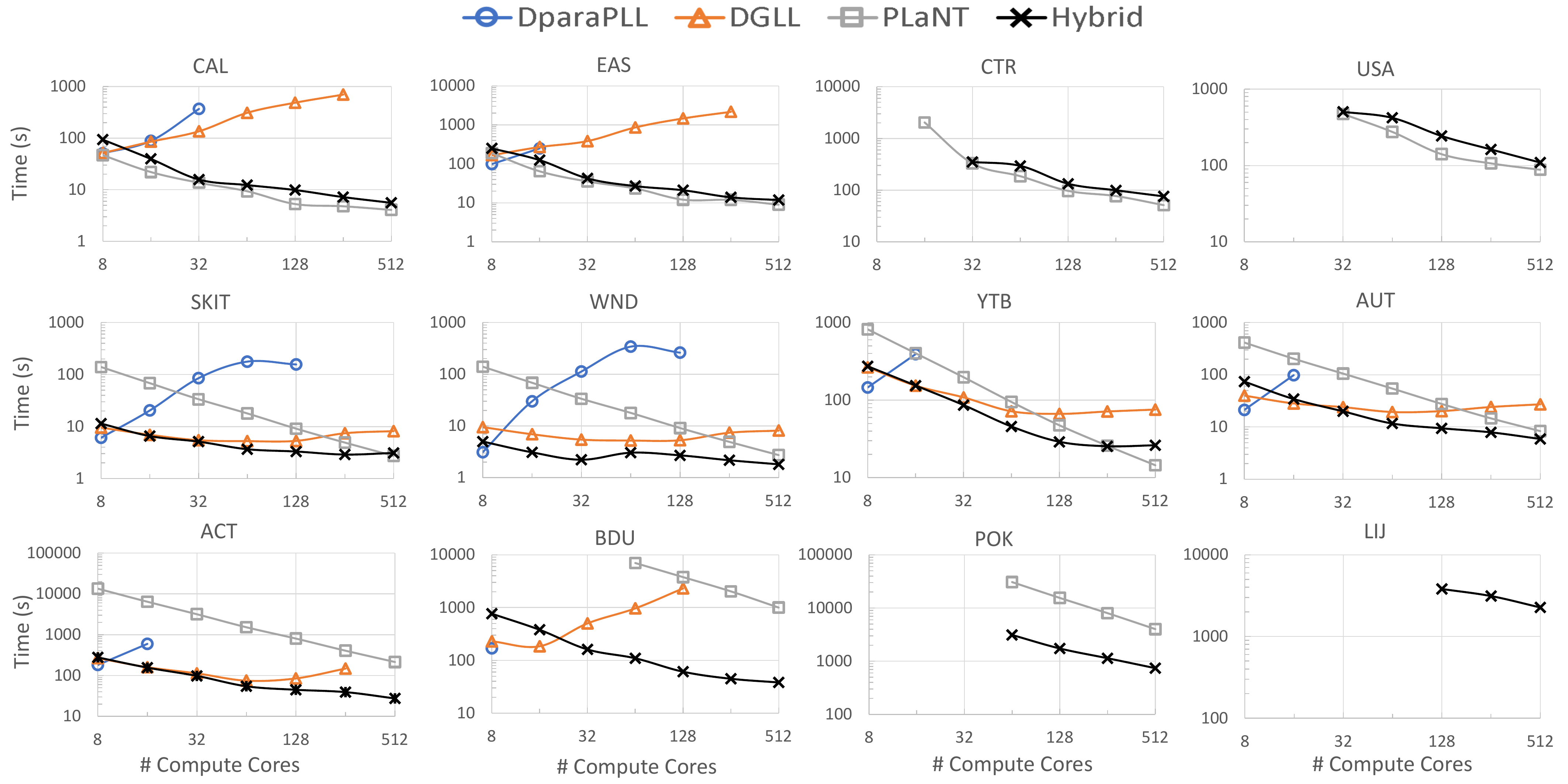}     \caption{Strong scaling results of DparaPLL, DGLL, PLaNT and Hybrid algorithms. Missing curves or points mean that the algorithm incurred OOM error or did not finish within $4$ hours. Also, \# compute cores = $8*$(\# compute nodes).}
    \label{fig:scale}
\end{figure*}
 
 We note that both DparaPLL and DGLL do not scale well as $q$ increases. DparaPLL often runs out-of-memory when $q$ is large. This is because in the first superstep itself, a large number of hub labels are generated that when exchanged, overwhelm the memory of the nodes. DGLL, on the other hand, limits the amount of labels per superstep by synchronizing frequently in the early stage of execution and increasing the synchronization point later. 
 
 Moreover, due to the absence of rank queries, the label size of DparaPLL explodes as $q$ increases (fig.\ref{fig:labelSize}). The efficiency of distance query based pruning in DparaPLL suffers 
 because on every compute node, labels from several high-ranked hubs (that cover a large number 
 of shortest paths) are missing. As the label size explodes, distance queries become expensive and the pre-processing becomes dramatically slower. On the other hand, \textit{rank queries} in DGLL allow pruning even at those hubs whose SPTs were not created on the querying node. Further, it periodically cleans redundant labels, thus, retaining the performance of distance queries. Yet, DGLL incurs significant communication and slows down as more compute nodes are involved in pre-processing. Neither DparaPLL, nor DGLL are able to process the large CTR, USA, POK and LIJ datasets, the former running out-of-memory and the latter failing to finish execution within time
 limit.

\begin{figure}[htb]
    \centering
\includegraphics[width=\linewidth]{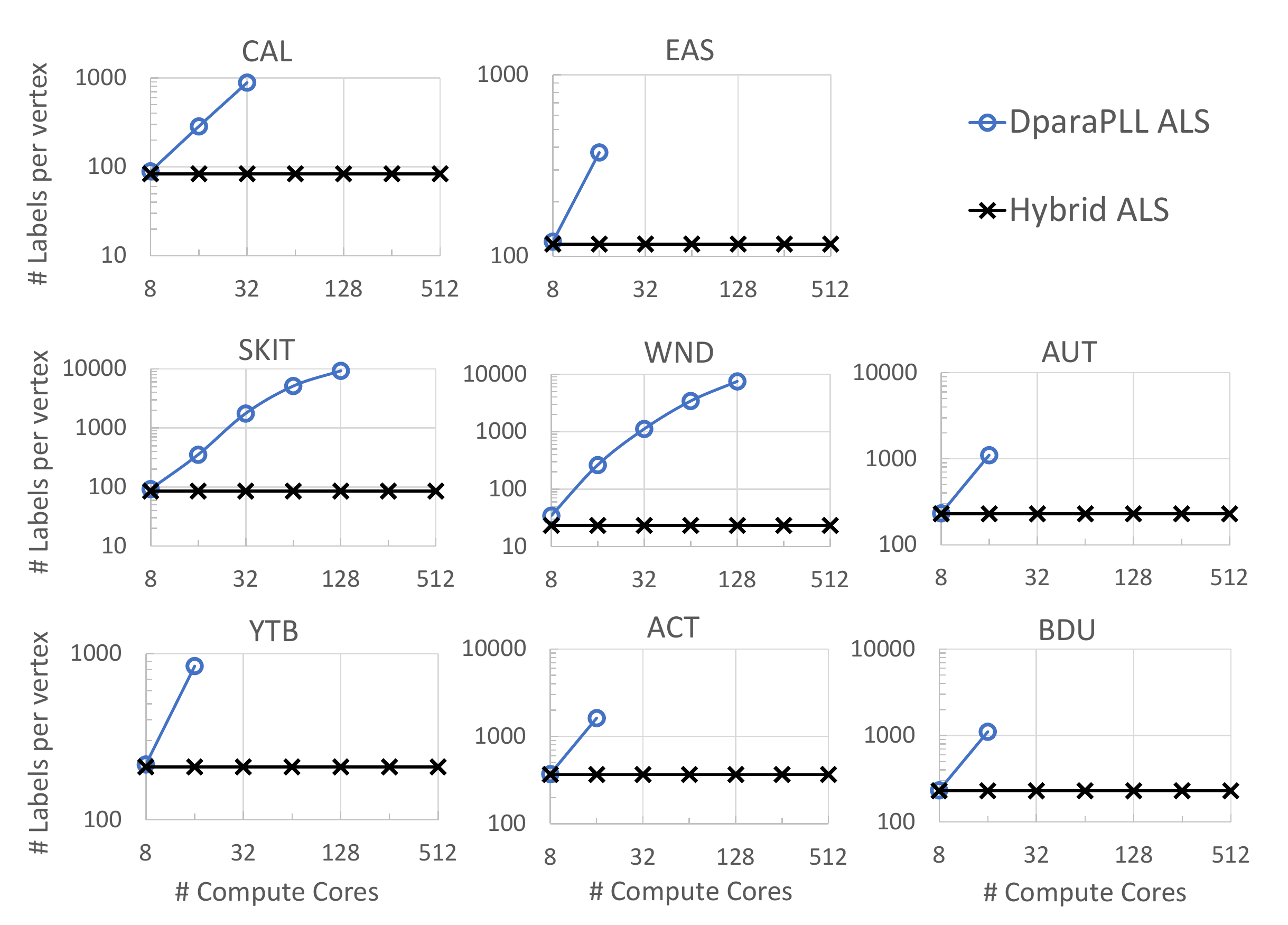}     
\caption{Average Label Size (ALS) generated by DparaPLL and Hybrid algorithms.\vspace{-2mm}}
    \label{fig:labelSize}
\end{figure}
 
 \begin{table*}[htb]
\caption {Query Processing Throughput, Latnecy and Total Memory Consumption for different modes on $16$ compute nodes. "-" means that corresponding mode cannot be supported due to main memory constraints.}
\label{table:queryPerf}
\centering
\resizebox{0.8\linewidth}{!}{
\centering
\begin{tabular}{|c|c|c|c|c|c|c|c|c|c|}
\hline
\multirow{2}{*}{\textbf{Dataset}} & \multicolumn{3}{c|}{\textbf{Throughput (million queries/$s$)}} & \multicolumn{3}{c|}{\textbf{Latency ($\mu s$ per query)}} & \multicolumn{3}{l|}{\textbf{Memory Usage (GB)}} \\ \cline{2-10} 
                                  & \textbf{QLSN}       & \textbf{QFDL}       & \textbf{QDOL}      & \textbf{QLSN}     & \textbf{QFDL}     & \textbf{QDOL}     & \textbf{QLSN}  & \textbf{QFDL}  & \textbf{QDOL} \\ \hline
CAL                               & 10.1                & 12.1                & \textbf{29.6}      & \textbf{2.8}      & 22.3              & 8.4               & 43.8           & \textbf{2.4}   & 13.7          \\ \hline
EAS                               & 7.1                 & 8.9                 & \textbf{14.6}      & \textbf{3.6}      & 24                & 11.4              & 125.4          & \textbf{7.4}   & 39.2          \\ \hline
CTR                               & -                   & 6.5                 & \textbf{9}      & -                 & 26.6              & \textbf{14.7}     & -              & \textbf{45}    & 242.1         \\ \hline
USA                               & -                   & 5.4                 & \textbf{10}        & -                 & 29.5              & \textbf{20}       & -              & \textbf{80}    & 413.3         \\ \hline
SKIT                              & 15.8                & 18.5                & \textbf{29.8}      & \textbf{1}        & 20.7              & 7.9               & 4.5            & \textbf{0.3}   & 1.4           \\ \hline
WND                               & 37.5                & 19.6                & \textbf{42.7}      & \textbf{0.3}      & 22.7              & 7.1               & 0.6            & \textbf{0.1}  & 0.6           \\ \hline
AUT                               & 4.9                 & 9.9                 & \textbf{27.5}      & \textbf{3.7}      & 21.7              & 12.9              & 16.6           & \textbf{1}     & 5.2           \\ \hline
YTB                               & 10.4                & 23.3                & \textbf{30.3}      & \textbf{2.2}      & 23.9              & 13.6              & 74.9           & \textbf{4.6}   & 23.4          \\ \hline
ACT                               & 3.2                 & 10.4                & \textbf{21.3}      & \textbf{4.8}      & 22.8              & 18.1              & 46.1           & \textbf{2.8}   & 14.4          \\ \hline
BDU                               & 13.2                & 16.4                & \textbf{21.5}      & \textbf{1.5}      & 22.1              & 11.1              & 54.7           & \textbf{3.2}   & 17.1          \\ \hline
POK                               & -                   & 5.1                 & \textbf{7.5}       & -                 & \textbf{32}                & 34.5     & -              & \textbf{77.6}  & 388.9         \\ \hline
LIJ                               & -                   & \textbf{6}          & -                  & -        & \textbf{31.6}     & -                 & -              & \textbf{125.8} & -             \\ \hline
\end{tabular}

}
\end{table*}

 PLaNT on the other hand, paints a completely different picture. Owing to its embarrassingly parallel nature, PLaNT exhibits excellent near-linear speedup upto $q=64$ for almost all datasets. On average, PLaNT is able to achieve $42\times$ speedup on $64$ nodes compared to single node execution. However, for scale-free graphs, PLaNT is not efficient. It is unable to process LIJ and takes more than an hour to process POK dataset even on $64$ nodes.
 
 The Hybrid algorithm combines the scalability of PLaNT with the pruning efficiency of DGLL (powered by Common Labels). It scales well upto $q=64$ and for most datasets, achieves $>10\times$ speedup over single node execution. At the same time, for large scale-free datasets ACT, BDU and POK, it is able to construct CHL $7.8\times$, $26.2\times$ and $5.4\times$ faster than PLaNT, respectively,  on $64$ nodes. When processing scale-free datasets on small number of compute nodes ($1$, $2$ or $4$ nodes), Hybrid beats PLaNT by more than an order of magnitude difference in execution time. Compared to DparaPLL, the Hybrid algorithm is $3.8\times$ faster on average when run on $2$ compute nodes. For SKIT and WND, the Hybrid algorithm is $47\times$ and $96.8\times$ faster, respectively, than DparaPLL on $16$ nodes. 
 
 Although fig.\ref{fig:labelSize} only plots ALS for Hybrid algorithm, even PLaNT and DGLL generate the same CHL and hence, have the same label size for any dataset irrespective of $q$ (section \ref{sec:dmp}). We also observe superlinear speedup in some cases (Hybrid $\rightarrow$ CAL and AUT $1$ node vs $2$ nodes; PLaNT $\rightarrow$ CTR $2$ nodes vs $4$ nodes). This is because these datasets generate huge amount of labels that exert a lot of pressure on the memory. When running on few nodes, main memory utilization on every node is very high, slowing down the memory allocation operations that happen when new labels are appended. In such cases, increasing number of nodes not only increases compute cores but also releases pressure on memory management system, resulting in a super linear \looseness=-1speedup.

 \textbf{Graph Topologies:} We observe that PLaNT alone not only scales well but is also extremely efficient for \textit{road networks}. On the other hand, in scale-free networks, PLaNT although scalable is not efficient as it incurs large overhead of additional exploration in low-ranked SPTs. This is consistent with our observations in figure \ref{fig:twVsLabel} where the maximum value of $\Psi$ for SKIT was $>10\times$ that of maximum $\Psi$ in CAL road network. Consequently, the Hybrid algorithm that cleverly manages the trade-off between additional exploration and communication avoidance, is significantly faster than PLaNT for most scale-free networks.

 
 We also observe that the Hybrid algorithm does not scale equally well for small datasets when the number of compute nodes is high. PLaNT eventually catches up with the Hybrid, even beating it in several cases. This is because even few synchronizations of large number of nodes completely dominate the small pre-processing time. Perhaps, scalability for small datasets can be improved by making the number of synchronizations and switching point from PLaNT to DGLL, a function of both $q$ and $\Psi$. 
 
 

\subsection{Evaluation of Query Modes}
In this section, we assess the different query modes on the basis of their memory consumption, query response latency and query processing throughput. Table \ref{table:queryPerf} shows the memory consumed by label storage under different modes. QLSN requires all 
labels to be stored on every node and is the most memory hungry mode. Both QDOL and QFDL distribute the labels
across multiple nodes enabling queries on large graphs where QLSN fails. 
Our experiments also confirm the theoretical insights into the memory usage of QFDL and QDOL presented in section \ref{sec:query}. On average, QDOL requires $5.3\times$ more main memory for label storage than QFDL. This is because the label size
per partition in QDOL scales with $\mathcal{O}\left(\frac{1}{\sqrt{q}}\right)$ and every compute node has to further store label set 
of $2$ such partitions. 

To evaluate the latency of various query modes, we generate $1$ million PPSD queries and compute their
response one at a time. In QFDL (QDOL) mode, one query is transmitted per MPI\_Broadcast (MPI\_Send,
respectively) and inter-node communication latency becomes a major contributor to the overall query response latency.
This is evident from the experimental results (table \ref{table:queryPerf}) where 
latency of QFDL shows little variation across different datasets. Contrarily, QLSN does not incur inter-node
communication and compared to QDOL and QFDL, has significantly lower latency although it increases proportionally with ALS. For most datasets, QDOL latency is $<2\times$ compared to QFDL, because of the cheaper point-to-point 
communication as opposed to more expensive broadcasts (section \ref{sec:query}). An exception is POK, where average label size is huge (table \ref{table:smp}) and QFDL takes advantage of multi-node parallelism to 
reduce \looseness=-1latency.

To evaluate the query throughput, we create a batch of $100$ million PPSD queries and compute their
responses in parallel. For most datasets, the added multi-node parallelism of QFDL and QDOL\footnote{In QDOL 
mode, prior to communicating the queries, we sort them based on the nodes that they map to. After receiving query responses, we rearrange them in original order. 
The throughput reported in table \ref{table:queryPerf} also takes into account,
the time for sorting and rearranging.} overcomes the query communication overhead and results in higher 
throughput than QLSN. QDOL is further $1.8\times$ faster than QFDL
because of reduced communication overhead\footnote{QDOL also has better memory access locality as every node traverses all hub labels of vertices in queries assigned to it. Contrarily, each node in QFDL only traverses a part of hub 
labels for all queries, frequently jumping from one vertex' labels to another.}.

\section{CONCLUSION AND FUTURE WORK}
In this paper, we address the problem of efficiently constructing Hub Labeling and answering shortest distance qu-\\eries on shared and distributed memory parallel systems. We outline the multifaceted challenges associated with the algorithm in general, and specific to the parallel processing platforms. We propose novel algorithmic innovations and optimizations that systematically resolve these challenges.  Our embarassingly parallel algorithm PLaNT, dramatically increases the scalability 
of hub labeling, making it feasible to utilize the massive parallelism in a cluster of compute \looseness=-1nodes.


We show that our approach exhibits good theoretical and empirical performance. Our algorithms are able to scale significantly better than the existing approaches, with orders of magnitude faster pre-processing and capability to process very large graphs.

There are several interesting directions to pursue in the context of this work. We will explore the use of distributed atomics and RMA calls to dynamically allocate tasks even on multiple nodes. This can improve load balance across nodes and further boost the performance of PLaNT and Hybrid algorithms. Another important area for research is development of heuristics to compute switching point between PLaNT and DGLL and to autotune the parameters $\alpha$ and $\beta$ to adapt to the graph topology and number of compute nodes $q$.

\clearpage
\newpage

\bibliographystyle{abbrv}
\bibliography{main}  


\end{document}